\documentclass[journal]{IEEEtran}

\usepackage{cite}
\usepackage{amsmath}
\usepackage{array}
\usepackage{url}
\usepackage{siunitx}
\usepackage{color}
\usepackage{graphicx}
\usepackage{amssymb}
\usepackage{epstopdf}
\usepackage{hyperref}
\usepackage{footnote}
\usepackage{epsfig,psfrag}
\usepackage{color}
\usepackage{amsmath}
\usepackage{mathrsfs}  
\usepackage{bm}
\usepackage{multirow}
\usepackage[mode=buildnew]{standalone}
\usepackage{algorithm}
\usepackage{algorithmicx}
\usepackage{algpseudocode}
\usepackage{multicol}
\usepackage{amsthm}
\usepackage{cleveref}
\usepackage{mleftright}
\usepackage{verbatim}
\usepackage{enumitem}
\usepackage{stmaryrd}
\usepackage{float}
\usepackage{makecell}

\newtheorem{mylemma}{Lemma}

\newtheorem{myproposition}{Proposition}
\newtheorem{mydefinition}{Definition}
\newtheorem{myremark}{Remark}

\newtheorem{myexample}{Example}

\DeclareMathOperator*{\argmax}{arg\,max}
\DeclareMathOperator*{\argmin}{arg\,min}

\ifCLASSINFOpdf
\else
\fi
\ifCLASSOPTIONcompsoc
 \usepackage[caption=false,font=normalsize,labelfont=sf,textfont=sf]{subfig}
\else
 \usepackage[caption=false,font=footnotesize]{subfig}
\fi


\begin{document}

\title{Verification of Neural Network Control Systems using Symbolic Zonotopes and Polynotopes\\ \color{red}\normalsize\rule{0pt}{14pt}This work has been submitted to the IEEE for possible publication.\\Copyright may be transferred without notice, after which this version may no longer be accessible.\color{black}}

\author{Carlos Trapiello, Christophe Combastel, and Ali Zolghadri, \IEEEmembership{Senior Member, IEEE}
\thanks{The authors are with Univ. Bordeaux, CNRS, Bordeaux INP, IMS, UMR 5218, F-33400 Talence, France; C. Trapiello is also with the Supervision, Safety and Automatic Control Research Center (CS2AC), of the Universitat Politècnica de Catalunya (UPC), 08222 Terrassa, Spain.}}

\markboth{IEEE Transactions on ---, VOL. XX, NO. XX, XXXX 2022}{Trapiello \MakeLowercase{\textit{et al.}}: Toward Verification of Neural Network Control Systems using Symbolic Zonotopes and Polynotopes}

\maketitle

\begin{abstract}
Verification and safety assessment of neural network controlled systems (NNCSs) is an emerging challenge. To provide guarantees, verification tools must efficiently capture the interplay between the neural network and the physical system within the control loop. In this paper, a compositional approach focused on inclusion preserving long term symbolic dependency modeling is proposed for the analysis of NNCSs. First of all, the matrix structure of symbolic zonotopes is exploited to efficiently abstract the input/output mapping of the loop elements through (inclusion preserving) affine symbolic expressions, thus maintaining linear dependencies between interacting blocks. Then, two further extensions are studied. Firstly, symbolic polynotopes are used to abstract the loop elements behaviour by means of  polynomial symbolic expressions and dependencies. Secondly, an original input partitioning algorithm takes advantage of symbol preservation to assess the sensitivity of the computed approximation to some input directions. The approach is evaluated via different numerical examples and benchmarks. A good trade-off between low conservatism and computational efficiency is obtained.
\end{abstract}

\begin{IEEEkeywords}
Reachability, neural networks, verification, symbolic zonotopes, polynotopes, nonlinear dynamics.
\end{IEEEkeywords}

\IEEEpeerreviewmaketitle


\section{Introduction} \label{sec:introduction}

\IEEEPARstart{T}{he} proliferation of data and access to ever-increasing computational power have fueled a renewed interest in deep neural-networks (NNs). These networks have shown a significant ability to address classification/estimation/control tasks that can hardly be formalized and designed from knowledge-based models. However, despite their impressive ability for solving complex problems, it is well known that NNs can be vulnerable to small perturbations or adversarial attacks~\cite{szegedy2013intriguing,kurakin2018adversarial}. This lack of robustness (or fragility) represents a major barrier for their application to safety-critical system where safety assurances are of primary importance. For example, in Guidance, Navigation and Control of flight systems, one must ensure that some output/state trajectories remain inside a flight envelope when some inputs explore a given region. The above issues have fostered a large amount of works that analyze the sensitivity  to  local disturbances of NNs in isolation (open-loop), as well as the satisfaction of pre-/post- safety conditions~\cite{liu2021algorithms}.  Nevertheless, as reported in~\cite{johnson2021arch}, reasoning about the safety verification of neural-network control systems (NNCSs), where the NN is used as a feedback controller, still remains a key challenge that requires tractable methods  capable of efficiently integrating the heterogeneous components that make up the control loop.

This paper focuses on the reachability analysis of NNCSs which, in turn, allows for formal reasoning about the satisfaction of safety properties  (reachability of a target set, or the avoidance of non-secure sets of states). A key challenge in NNCSs reachability analysis is to successfully retain the system-controller interplay by preserving (at each time instant) the dependencies between relevant variables. This, in fact,  discourages a direct application of off-the-shelf verification tools which, although able to compute accurate output bounds for elements in isolation, return coarse approximations when iteratively concatenated for the analysis of closed-loop systems since most of (if not all) the I/O dependencies are quickly broken/lost during the computations~\cite{yang2019efficient,tran2020nnv,claviere2021safety}. Furthermore, effective NNCSs verification tools must be able to assess the system state during (relatively) large time intervals. The above issues motivate the development of computationally efficient  analysis methods capable of  capturing the interaction between the control loop elements  while granting a good scalability both in the system dimensions and in the time horizon length.

Another relevant factor that should be taken into account is the size of the initial state set under study. Mainly, the performance of open- and closed-loop verification techniques that are based on (locally) abstracting the system  non-linearities, deteriorates considerably for large initial sets. A common approach to address this issue, particularly in NNCSs verification problems where the number of dimensions is relatively small,  is to recast the initial reachability problem into simpler subproblems that analyze a subset of the initial conditions~\cite{yang2019efficient,schilling2021verification,claviere2021safety,ivanov2020verifying,sidrane2022overt,everett2021efficient,xiang2020reachable,everett2020robustness}. Nonetheless, the design of efficient and scalable partitioning strategies, specially in closed-loop verification schemes,  remains also an open problem.


\paragraph*{Related work} Preserving dependencies for NNCS verification has spurred on some recent studies. In~\cite{dutta2019reachability}, the authors abstract the I/O mapping of a ReLU NN controller using a polynomial expression (plus an error interval). The polynomial rule is obtained by regression of I/O samples, whereas a sound error term is derived from solving a mixed-integer program (MIP). In a similar fashion, \cite{huang2019reachnn} uses Bernstein polynomials to abstract the NN controller. A theoretical and a sampled-based method is proposed to compute the error term based on the  Lipschitz constant of the NN. Although both approaches preserve the system-controller interplay, they are computationally expensive, scaling poorly with the number of NN inputs while requiring to be iteratively repeated  for each output. In~\cite{ivanov2020verifying}  a NN with differentiable activation functions is transformed into an equivalent hybrid system built upon Taylor models that  retain dependencies. However, this approach is not applicable to ReLU functions and the number of states (resp. modes) of the hybrid automaton scales with the number of neurons (resp. layers).

Other approaches preserve system-controller dependencies by formulating the reachable set computation as an optimization problem. The work~\cite{hu2020reach} proposes a semidefinite program for reachability analysis based on the abstraction of the NN non-linearities using quadratic constraints~\cite{fazlyab2020safety}. \cite{everett2021efficient} relies on the tool  CROWN~\cite{zhang2018efficient}, and preserves system-controller interaction by solving LP-programs. However,  dependencies are broken from one sample to the next. In~\cite{sidrane2022overt}, the closed-loop is firstly abstracted  as a conjunction of piecewise-linear functions, and then analyzed using ReLU NNs verification tools like~\cite{tjeng2017evaluating,katz2017reluplex}.

On the other hand, other works address the reachability problem by chaining different verification tools. In~\cite{yang2019efficient}, the authors combine a polytopic abstraction of the dynamical system with the tool Sherlock~\cite{dutta2017output}, that is used to bound the NN controller outputs. The tool NNV~\cite{tran2020nnv}, integrates the non-linear dynamics  reachability tool CORA~\cite{althoff2015introduction} with a star sets abstraction of ReLU NN controllers~\cite{tran2020verification}. Besides,  \cite{claviere2021safety} combines validated simulation to soundly approximate the dynamical system with common tools for NN output bounding like DeepPoly~\cite{singh2019abstract}. In all the above works, dependencies are broken in the switch  between the different tools. This latter issue is somehow palliated in~\cite{schilling2021verification}, where the authors use second order zonotopes (i.e. zonotopes with generators matrix size $n \times 2n$) as an interface between system and NN controller analysis tools. Although capable to retain first order dependencies in the system-to-controller (and controller-to-system) set transformations, dependencies in the I/O mapping of the NN controller are broken. 

Focusing on partitioning strategies, in \cite{wang2018formal} the gradient of a ReLU NN (open-loop) is used to decide the next input direction to be bisected, whereas in \cite{xiang2018output} a uniform grid of the initial set is employed. Other works propose a simulation-based splitting strategy. In \cite{xiang2020reachable}, the bisection is guided by comparing  the  interval bound of Monte-Carlo samples with a guaranteed  Interval Bound Propagation \cite{gowal2018effectiveness} of the initial subsets. Working in a similar fashion, \cite{everett2020robustness} proposes a simulation-guided framework that unifies standard NN output bounding tools. The decision on the bisection order is based on the distance to the simulation samples enclosure. A closed-loop implementation of the latter algorithm is reported in~\cite{everett2021efficient}.


\paragraph*{Contributions} This paper takes a new and original direction based on symbolic zonotopes (s-zonotopes) as a generic tool for the closed-loop verification of discrete-time NN controlled systems. The generators (matrix) representation of s-zonotopes enables to efficiently abstract the input-output mappings of the NN controller and non-linear physical system through (inclusion preserving) affine symbolic expressions. The evolution of the closed-loop system can then be bounded in a propagation-based fashion that benefits from the efficient computation of basic operations granted by s-zonotopes, while preserving system-controller linear dependencies. Besides,  the computational complexity of the verification tool can be fixed by limiting (reducing) the number of independent symbols. Simulations show the good performance/computational efficiency trade-off granted by this approach.

Furthermore, two extensions are proposed. On the one hand, the use of polynomial symbolic expressions to abstract the input-output mapping of the loop elements is explored. In particular, symbolic polynotope (s-polynotope) structures \cite{combastel2020functional} are used to enclose the NN activation functions graph via the non-convex sets that arise from the polynomial map of interval symbols. Polynomial abstractions enable to reduce the conservatism induced by linear relaxations, at the price of increasing the computation needs.

On the other hand, the  symbols preservation throughout the control loop is exploited to develop a  smart partitioning strategy of the initial conditions set. The proposed algorithm reasons upon the influence of the input symbols in the output set in order to select which dimension to bisect next, and upon the influence of the (independent) error symbols to assess the quality of each over-approximation. 


\paragraph*{Structure} The paper is organized as follows. \Cref{sec:preliminaries} is devoted to some useful preliminaries. \Cref{sec:statement} introduces the problem statement. Then,  \Cref{sec:verification} analyzes the closed-loop verification using s-zonotopes. In  \Cref{sec:polynotope} the use of s-polynotopes is investigated, whereas the input partitioning algorithm is detailed in \Cref{sec:partitioning}. \Cref{sec:simulations} presents simulation results. Finally,  some concluding remarks are provided in \Cref{sec:conclusions}. 


\paragraph*{Notation} The following notations are used along this work. $\mathbb{R}^n$, $\mathbb{R}^{m\times n}$ and $\mathbb{N}$ denote the $n$ dimension Euclidean space, the $m \times n$ dimensional Euclidean space and the set of non-negative integers, respectively. The notation $v_i$ stands for the $i$-th element of vector $v$ and $M_{[i,:]}$ ($M_{[:,j]}$) for the $i$-th row ($j$-th column) of matrix $M$. The 1-norm  of the (row) vector $v$ is $\|v\|_1 = |v|\mathbf{1}$, with $|.|$  the elementwise absolute value, and $\mathbf{1}$ a column vector of ones of appropriate size. $diag(v)$ returns a square diagonal matrix with the elements of vector $v$ in the main diagonal, whereas $card(\cdot)$ gives the cardinal.


\section{Symbolic dependencies in set computations} \label{sec:preliminaries}

This section provides preliminary concepts which will be used in the following sections. Throughout this article, $s$ refers to an indexed family of distinct symbolic variables of type unit interval, that is, $\forall i \in \mathbb{N}$, the symbol $s_i$ (uniquely identified by the integer $i$) refers to a scalar real variable the value of which is only known to belong to the unit interval $\mathcal{D}(s_i)=[-1,+1] \subset \mathbb{R}$. Also, $\mathcal{D}(s)=[-1,+1]^{card(s)}$. In other words, the a priori unknown value $\iota s_i$ taken by the symbolic variable $s_i$ satisfies $\iota s_i \in \mathcal{D}(s_i)$. The generic notation $\iota$ which reads as "interpretation/valuation of" helps disambiguate between symbols (syntax) and values (semantics) \cite{combastel2020functional}. Note that, in general, several interpretations may coexist. Set-valued interpretations take sets as values. In the following, consistently with the definition domain $\mathcal{D}(s_i)$ related to $s_i$, the set-valued interpretation of each symbolic variable $s_i$ will be ${s_i}_{|\iota} = [-1,+1]$. In addition,  the integer-valued vector $I$ is used to uniquely identify a set of symbols, for example,  vector $I = [1, \,5, \, 3]$ identifies the symbols $s_1$, $s_5$ and $s_3$. For brevity of notation, $s_I$ denotes the column vector $[s_i]_{i\in I}$.

\begin{mydefinition}[s-zonotope \cite{combastel2020functional}] \label{def:s-zonotope}
A symbolic zonotope $\mathcal{X}_{|s}$ is an affine symbolic function that can be written in the form $c + Rs_I$ where vector $c$ and matrix $R$ do not depend on the symbolic variables in $s_I$. Notation: $\mathcal{X}_{|s} = \langle c,R,I\rangle _{|s} = c + R s_I$.
\end{mydefinition}

\begin{mydefinition}[e-zonotope \cite{combastel2020functional}] \label{def:s-zonotope}
The e-zonotope $\mathcal{X}_{|\iota}$ related to the s-zonotope $\mathcal{X}_{|s} = \langle c,R,I\rangle_{|s} = c + R s_I$ is the set-valued interpretation of $\mathcal{X}_{|s}$ as $\mathcal{X}_{|\iota} = \langle c,R,I\rangle _{|\iota} = \{c + R \sigma | \sigma \in \mathcal{D}(s_I)\}$. 
\end{mydefinition}

A basic example is : given $i \in \mathbb{N}$, $\langle 0,1,i\rangle _{|s} = s_i$ (symbolic expression corresponding to the $i$-th symbol in $s$) and $\langle 0,1,i\rangle _{|\iota} = \mathcal{D}(s_i) = [-1,+1]$ (set-valued interpretation of $s_i$). More generally, s-zonotopes  and their interpretation as e-zonotopes make it possible to explicitly perform operations either at symbolic/syntactic level ($._{|s}$) or at semantic level ($._{|\iota}$). 

\begin{myremark}
In this work, all symbols being of type unit interval, $c$ being a real vector and $R$ a real matrix, $\mathcal{X}_{|\iota}$ is a classical zonotope $\langle c,R \rangle$ with center $c$ and generator matrix $R$ (for extensions to other symbol types, see \cite{combastel2020functional}). Note that $\langle 0,1,i\rangle_{|s}-\langle 0,1,j\rangle_{|s} = s_i - s_j = 0$ for $i=j$, whereas $\langle 0,1,i\rangle_{|\iota}-\langle 0,1,j\rangle_{|\iota} = [-2,+2]$ for all $(i,j)$, that is, even for $i=j$. Operating at the symbolic/syntactic level thus permits more accurate set evaluations by preserving trace of symbolic dependencies. This is a key point to prevent from pessimistic outer approximations induced by the so-called dependency problem \cite{moore2009introduction} affecting natural interval arithmetic and other classical set-based operations only considering the semantic level.
\end{myremark}

From a computational point of view, an s-zonotope is defined by storing the triplet $(c,R,I)$. Due to their affine structure, a key aspect is to efficiently trace  the identifier $i \in I$ of the symbol that multiplies each column of the matrix $R$.  To that end, Matrices with Labeled Columns (MLCs), constitute a data structure featuring columnwise sparsity: It is defined by the pair $(R,I)$ that allows for efficiently recasting standard operations involving s-zonotopes as set-operations on the identifiers vector ($I$) and column-wise operations in the projection matrices ($R$). For how to translate  operations such as sum or linear image onto a computational platform using MLCs the interested reader can refer to \cite{combastel2020distributed}.

Due to their relevance in further developments, the following operations involving s-zonotopes are briefly recalled.

\begin{mylemma}[common symbols \cite{combastel2020distributed}]\label{lem:common_id}
Any two s-zonotopes $\mathcal{X}_{|s} = \langle c_x, R,I\rangle_{|s}$ and $\mathcal{Y}_{|s} = \langle c_y, G, J\rangle_{|s} $, can be rewritten using a common set of symbols $s_K$ as $\mathcal{X}_{|s} = \langle c_x , \tilde{R}, K\rangle_{|s}$ and  $\mathcal{Y}_{|s} = \langle c_y, \tilde{G}, K\rangle_{|s}$, with 
\begin{equation}\label{eq:mergeid}
\begin{aligned}
      \tilde{R} &= \begin{bmatrix} R_1, & R_2, & 0 \end{bmatrix}, \quad 
      \tilde{G} = \begin{bmatrix} G_1, & 0, & G_2 \end{bmatrix}, \\
      K &= \begin{bmatrix}I\cap J; & I \setminus J; & J \setminus I \end{bmatrix}. 
\end{aligned}
\end{equation}
\end{mylemma}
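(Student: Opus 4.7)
The plan is to verify the claimed representations directly by a syntactic rewriting that augments each s-zonotope with zero-weighted generators corresponding to the "foreign" symbols and then reorders columns to match $K$. Since the construction stays at the symbolic level, the identities $\mathcal{X}_{|s}=c_x+\tilde{R}s_K$ and $\mathcal{Y}_{|s}=c_y+\tilde{G}s_K$ will hold as affine symbolic expressions (and hence also after any interpretation $\iota$).

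First I would observe that $K$, as constructed in \eqref{eq:mergeid}, is a specific ordering of the disjoint union $(I\cap J)\sqcup(I\setminus J)\sqcup(J\setminus I)=I\cup J$, so each identifier of $I\cup J$ appears in $K$ exactly once. For $\mathcal{X}_{|s}=c_x+Rs_I$, I partition the index set $I$ as $I=(I\cap J)\cup(I\setminus J)$ and let $R_1$, respectively $R_2$, denote the column-submatrices of $R$ associated (via the MLC data structure) with $I\cap J$ and $I\setminus J$, taken in the orderings induced by $K$. Because matrix-vector multiplication is invariant under a consistent permutation of columns and of the entries of $s_I$, we obtain
\begin{equation*}
Rs_I \;=\; R_1 s_{I\cap J}+R_2 s_{I\setminus J}.
\end{equation*}
Now I append the block of symbols $s_{J\setminus I}$, which are absent from the original expression, with a zero coefficient matrix. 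Since $0\cdot s_{J\setminus I}=0$ as a symbolic vector, adding it does not change the expression, and concatenating horizontally yields
\begin{equation*}
Rs_I \;=\; \begin{bmatrix} R_1 & R_2 & 0\end{bmatrix}\begin{bmatrix} s_{I\cap J}\\ s_{I\setminus J}\\ s_{J\setminus I}\end{bmatrix} \;=\; \tilde{R}\,s_K,
\end{equation*}
which is exactly the required form. The argument for $\mathcal{Y}_{|s}$ is symmetric: split $J=(I\cap J)\cup(J\setminus I)$, let $G_1,G_2$ be the column-submatrices of $G$ associated with these two index sets (again reordered consistently with $K$), and pad with a zero block for $I\setminus J$ to obtain $Gs_J=\tilde{G}s_K$ with $\tilde{G}=[G_1,\,0,\,G_2]$.

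There is essentially no hard step: the statement is a bookkeeping identity that exploits the sparsity structure of MLCs. The only point requiring care is to ensure that the column permutations applied to $R$ and $G$ on one hand, and to the entries of $s_I$ and $s_J$ on the other, are consistent with the fixed ordering chosen for $K$; this is precisely what the MLC abstraction is designed to handle. Once this consistency is fixed, the identity $Rs_I=\tilde{R}s_K$ (and likewise for $\mathcal{Y}$) follows from the distributivity of matrix multiplication over the block decomposition and from the fact that multiplying any symbol by a zero column contributes nothing to the symbolic expression.
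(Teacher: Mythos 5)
Your proof is correct: the lemma is indeed a pure bookkeeping identity, and your argument — partitioning $I$ and $J$ relative to their intersection, permuting columns consistently with the ordering of $K$, and padding with zero blocks for the foreign symbols — is the natural and complete verification. Note that the paper itself offers no proof (the result is simply recalled from the cited reference), so there is no alternative argument to compare against; your only implicit assumption, that the identifiers within each of $I$ and $J$ are pairwise distinct, is guaranteed by the paper's convention that an identifier vector uniquely identifies a set of symbols.
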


Matrices $(R_1,G_1)$ in  \Cref{lem:common_id} may be empty matrices if $I\cap J$ is empty (similarly for $R_2$ and  $I \setminus J$ or $G_2$ and $J \setminus I$).

\begin{mydefinition}[basic operations \cite{combastel2020distributed}] \label{def:operations}
Given two s-zonotopes $\mathcal{X}_{|s}$ and $\mathcal{Y}_{|s}$ with a common set of symbols $s_K$ as in \Cref{lem:common_id}, then  their sum and vertical concatenation are the s-zonotopes 
\end{mydefinition}
\vspace*{-8mm}
\begin{align}
    & \mathcal{X}_{|s} + \mathcal{Y}_{|s} = \big\langle c_x+c_y,  [R_1+G_1, \, R_2, \, G_2], K \big\rangle_{|s}, \label{eq:sum} \\
    &[\mathcal{X}_{|s};\mathcal{Y}_{|s}] = \bigg \langle \begin{bmatrix}
c_x \\ c_y
\end{bmatrix}, \begin{bmatrix}
R_1 & R_2 & 0\\ G_1 & 0 & G_2
\end{bmatrix}, K\bigg \rangle_{|s}. \label{eq:vcat}
\end{align}

\begin{mydefinition}[inclusion \cite{combastel2020functional}] \label{def:inclusion}
The s-zonotope $\mathcal{Y}_{|s}$  is said to include the s-zonotope $\mathcal{X}_{|s}$, if the set-valued interpretation of $\mathcal{Y}_{|s}$ includes the set-valued interpretation of $\mathcal{X}_{|s}$. In other words, the expression $\mathcal{X}_{|s} \subset \mathcal{Y}_{|s}$ interprets as $\mathcal{X}_{|\iota} \subset \mathcal{Y}_{|\iota}$.
\end{mydefinition}

\Cref{def:inclusion}  paves the way for rewriting rules (at symbolic level) that may be either inclusion preserving or inclusion neutral or none of both at set-based evaluation (semantic) level. A more formal treatment of this topic can be found in the definition 27 (rewriting rules and inclusion) in \cite{combastel2020functional}, 
where a definition of inclusion functions is also given in definition 2. 

\begin{mydefinition}[reduction \cite{combastel2020functional}] \label{def:reduction}
The reduction operator $\downarrow_q $ transforms an s-zonotope $\mathcal{X}_{|s} = \langle c, R, I\rangle_{|s}$ into a new s-zonotope $\tilde{\mathcal{X}}_{|s} = \downarrow_q \mathcal{X}_{|s} = \langle c, G, J\rangle_{|s}$, such that  $\tilde{\mathcal{X}}_{|s}$ includes $\mathcal{X}_{|s}$ while depending on at most $q$ symbols, i.e. $card(J) \leq q$.
\end{mydefinition}

Reduction is thus an inclusion preserving transform. In \Cref{def:reduction}, $I \cap J \neq \emptyset $ is not mandatory but often useful to prevent from a further propagation of conservative approximations, while controlling the complexity of $\tilde{\mathcal{X}}_{|s}$ through the maximum number $q$ of its symbols/generators. In this context, preserving the more significant symbols/dependencies is often beneficial: as in \cite{combastel2020distributed}, 
if $p>q$ a common practice is to replace the $p-q+1$ less important symbols by a new independent one while guaranteeing the inclusion $\mathcal{X}_{|s} \subseteq \tilde{\mathcal{X}}_{|s}$. Besides, note that  new symbols introduced to characterize independent behaviors must be uniquely identified. Wherever needed, the generation of a vector of $n$ new  unique symbols identifiers is denoted as~$!(n)$. The generation of a pre-specified number of identifiers can be attained  by implementing, for example, the Unique Symbols Provider (USP) service introduced in~\cite{combastel2020distributed}.


\section{Problem statement} \label{sec:statement}


\subsection{System description} \label{sec:NNCSS}

Consider the  interconnection of a discrete-time non-linear dynamic model \eqref{eq:nonlinear_model} and a neural network. The physical system is modeled as:
\begin{equation}\label{eq:nonlinear_model}
    x(t+1) = f(x(t),u(t),w(t)),
\end{equation}
\noindent where $x(t) \in \mathbb{R}^{n_x}$ and $u(t) \in \mathbb{R}^{n_u}$ respectively refer to the state and the control input at time step $t \in \mathbb{N}$. For all $t\geq 0$,  vector $w(t)$   accounts for modeling errors and process disturbances and satisfies $w(t) \in \mathcal{W} = [-1, \, +1]^{n_w}$.

The system \eqref{eq:nonlinear_model} is controlled by a  state-feedback controller $g(x(t)) : \mathbb{R}^{n_x} \mapsto \mathbb{R}^{n_u}$ parameterized by an $l$-layer feed-forward fully connected neural network. The map $x \mapsto g(x)$ is described by the following recursive equations
\begin{equation} \label{eq:NN_model}
    \begin{aligned}
    &x^{(0)}  = x,\\
    &x^{(k+1)} = \phi^{(k)}(W^{(k)}x^{(k)} + b^{(k)}), \quad   k  = 0,...,l-1,\\
    & g(x) = W^{(l)}x^{(l)} + b^{(l)},
    \end{aligned}
\end{equation}
\noindent where  $x^{(k)}\in \mathbb{R}^{n_k} $ are the outputs (post-activation) of the $k$-th layer. The weight matrix $W^{(k)} \in \mathbb{R}^{n_{k+1}\times n_k}$ and bias $b^{(k)} \in \mathbb{R}^{n_{k+1}}$ define the affine mapping $z^{(k)} = W^{(k)}x^{(k)} + b^{(k)}$  for the $(k+1)$-th layer. Besides, the vector-valued function  $\phi^{(k)}: \mathbb{R}^{n_{k+1}} \to \mathbb{R}^{n_{k+1}}$ is applied element-wise to the pre-activation vector $z^{(k)}$, that is, 
$  \phi^{(k)}(z^{(k)}) =
    [
    \varphi(z^{(k)}_1), \cdots, \varphi(z^{(k)}_{n_{k+1}}) ]^T$, 
where $\varphi: \mathbb{R} \to \mathbb{R}$ is the (scalar) activation function.  Common activation choices are:  ReLU  $\varphi(z) = max(0,z)$; sigmoid $\varphi(z) = \frac{1}{1+e^{-z}}$; and tanh $\varphi(z) = tanh(z)$. 

The closed-loop system  with dynamics \eqref{eq:nonlinear_model} and a previously trained neural-network control policy \eqref{eq:NN_model}, is governed by
\begin{equation} \label{eq:closed_loop}
    x(t+1) = f_g(x(t),w(t)) =  f\big(x(t),g(x(t)),w(t) \big).
\end{equation}

Accordingly, given an initial set $\mathcal{X}_0 \subset \mathbb{R}^{n_x}$, the forward reachable set of \eqref{eq:closed_loop} at time step $t$  is denoted as $\mathcal{X}(t)$. For $t\geq 1$, this set is defined as: 
\begin{equation} \label{eq:reachable_set}
\begin{aligned}
          \mathcal{X}(t) = \big\{x(t) \, |& \,  \exists (x(0), w(0:t-1)) \in \mathcal{X}_0 \times \mathcal{W} \times ... \times \mathcal{W}, \\
       & \, \forall \tau \in [0, t-1], x(\tau +1) = f_g(x(\tau),w(\tau)) \big \}. 
\end{aligned}
\end{equation}


\subsection{Finite-time reach-avoid (RA) verification problem}

Given a goal set $\mathcal{G} \subset \mathbb{R}^{n_x}$ a sequence of avoid sets $\mathcal{A}(t) \subset \mathbb{R}^{n_x}$ and a finite time horizon $N \in \mathbb{N}^+$, it is desired to test whether  
\begin{equation}\label{eq:safety_specifications}
  \begin{aligned}
    &\mathcal{X}(N) \subseteq \mathcal{G} \\
    &\mathcal{X}(t) \cap \mathcal{A}(t) = \emptyset, \quad \forall t = 0,...,N-1
\end{aligned}  
\end{equation}
\noindent holds true for the closed loop system \eqref{eq:closed_loop}. In general, the exact evaluation of \eqref{eq:safety_specifications} for a NNCSS is a computationally intractable problem.  Thus, the problem is resorted to iteratively compute a tractable  over-approximation of the reachable set $\mathcal{X}(t) \subseteq \bar{\mathcal{X}}(t)$, to test \eqref{eq:safety_specifications} using $\bar{\mathcal{X}}(t)$ instead. Because of the over-approximation, the proposed verification setting only provides one-sided guarantees, that is, if  $\bar{\mathcal{X}}(t)$ satisfies \eqref{eq:safety_specifications} then it can be guaranteed that \eqref{eq:reachable_set} will satisfy the RA property, but no sound conclusion about the safety of \eqref{eq:reachable_set} can be made if the over-approximation $\bar{\mathcal{X}}(t)$ violates \eqref{eq:safety_specifications}. Therefore, the computation of tight over-approximations is of paramount importance, so that a maximum number of truly satisfied specifications can be computationally proven as such.


\section{Closed-loop verification using s-zonotopes} \label{sec:verification}

This section presents the methodology for computing a
sound over-approximation of the closed-loop system that preserves system-controller linear dependencies. The computation takes advantage of s-zonotopes described in the previous section. The abstraction of the control loop components using affine symbolic expressions is presented below.


\subsection{Initial set} \label{sec:initial_set}

It is assumed that the initial set can be described by the set-valued interpretation of an s-zonotope $\mathcal{X}_{|s}{(0)} = \langle c_0, R_0, I_0 \rangle_{|s}$, where $c_0\in \mathbb{R}^{n_x}$ and $R_0 \in \mathbb{R}^{n_x \times n_0}$ and  $I_{0} = !(n_0)$ is a set of $n_0$ unique identifiers  for the interval valued symbols $s_{I_0}$. In other words, it is assumed that $\mathcal{X}_0 = \mathcal{X}_{|\iota}(0)$. Note that, any arbitrary zonotopic set $\{c +R \xi \, | \,  \|\xi\|_\infty \leq 1\}$ can be abstracted as an s-zonotope by characterizing the independent behaviour of the generators through new interval type symbols.


\subsection{NN controller affine abstraction} \label{sec:controller_abs_linear}

For the sake of simplicity of notations, the temporal notation is dropped here. Given a state bounding s-zonotope $\mathcal{X}_{|s} = \langle c, R, I \rangle_{|s}$ and a NN controller \eqref{eq:NN_model}, the idea is to abstract the NN behavior through  an affine symbolic expression of the form \begin{equation} \label{eq:input_sym}
 \mathcal{U}_{|s}  = \langle C_u, [G, \,H], [I; \, J]\rangle_{|s} =c_u + Gs_I + Hs_{J},
\end{equation}
\noindent such that, it guarantees the local enclosure of the network outputs, i.e. $g(\mathcal{X}_{|\iota}) \subseteq \mathcal{U}_{|\iota}$. Note that,  expression \eqref{eq:input_sym} captures the linear dependencies of the state symbols (identified by $I$), plus the addition of new error symbols (identified by $J$) that are introduced to guarantee the soundness of the method. 

The computation of vector $c_u$, matrices $G,H$, and the identifiers vector $J$ is discussed below. The focus is on generating a dependencies-preserving inclusion for an arbitrary layer of NN \eqref{eq:NN_model}, since a sound enclosure for the whole network follows by induction due its sequential nature. For simplicity, the layer superscript is removed below and the superscript $^+$ is used to denote the next layer.

\textbf{Affine mapping} $ \ $ Given the  $s$-zonotope $\mathcal{X}_{|s} = \langle c, R, I \rangle_{|s}$, the affine mapping $\mathcal{Z}_{|s} = W\mathcal{X}_{|s} + b$ in the layers of \eqref{eq:NN_model} yields a (pre-activation) s-zonotope of the form
\begin{equation}
\begin{aligned}
    &\mathcal{Z}_{|s} = \langle \check{c}, \check{R},I \rangle_{|s},\\
    & \check{c} = Wc+b,    \quad \check{R} = WR.
\end{aligned}
\end{equation}

\textbf{Activation functions} $ \ $ Activation functions  $\varphi(\cdot)$ in \eqref{eq:NN_model}  are applied element-wise to the pre-activation vector. Hence, the projection of $\mathcal{Z}_{|s}$ onto  the $i$-th neuron, yields the s-zonotope
\begin{equation}\label{eq:projection}
    \mathcal{Z}_{i|s} = \langle \check{c}_i,\check{R}_{[i,:]},I \rangle_{|s}.
\end{equation}

Notice that, any point belonging to set-valued interpretation $\mathcal{Z}_{i|\iota}$ of \eqref{eq:projection} is confined within an interval $[l_i, \, u_i]$, where, since $\mathcal{Z}_{i|\iota}$ is a one dimensional zonotopic set, it follows that $\mathcal{Z}_{i|\iota} = [l_i, \, u_i]$ with the  lower and upper bounds
\begin{equation} \label{eq:bounds}
    l_i = \check{c}_i - \|\check{R}_{[i,:]}\|_1, \quad u_i = \check{c}_i + \|\check{R}_{[i,:]}\|_1.
\end{equation}

Therefore, the soundness of the method can be certified by guaranteeing the inclusion  (see \Cref{def:inclusion}) of the graph of the activation function in the range $[l_i, \, u_i]$. To that end, 
the activation function $\varphi(\cdot)$ is abstracted  through an affine symbolic function of the form
\begin{equation}\label{eq:neuron_post_activation}
  \mathcal{X}_{i|s}^{+} =   \alpha_i \mathcal{Z}_{i|s} + \beta_i   + \gamma_i s_{j},
\end{equation}
\noindent where  $s_j$ represents a new independent symbol (identified through $j = !(1)$) that must be introduced to guarantee the  full coverage of the activation function graph on the considered range, that is, in order to satisfy the  condition
\begin{equation} \label{eq:inclusion_condition}
  \begin{bmatrix}
       \mathcal{Z}_{i|\iota} \\
    \varphi( \mathcal{Z}_{i|\iota}) 
    \end{bmatrix} \subseteq  \begin{bmatrix}
       \mathcal{Z}_{i|\iota} \\
  \alpha_i \mathcal{Z}_{i|\iota}  + \beta_i   + \gamma_i \mathcal{D}(s_{j})
    \end{bmatrix}.
\end{equation}

The $i$-th neuron post-activation s-zonotope $\mathcal{X}_{i|s}^{+}$ in \eqref{eq:neuron_post_activation} not only guarantees that its set-valued interpretation encloses the neuron output, but it preserves the linear influence of the symbols $s_{I}$ in the output set. This later point plays a fundamental role since it allows to retain the interplay between the inputs of the neurons at the same layer. Coherently, the layer post-activation s-zonotope can be computed by vertically concatenating in a recursive fashion the different  $\mathcal{X}_{i|s}^{+}$ after  rewriting them using the same set of symbols
\begin{equation}
   \mathcal{X}_{|s}^+ = [ \, ... \, [\,[\mathcal{X}_{1|s}^{+};\mathcal{X}_{2|s}^{+}];\mathcal{X}_{3|s}^{+}] \, ... \, ;\mathcal{X}_{n_k|s}^{+}].
\end{equation}

\begin{myproposition}[NN s-zonotope] \label{prop:linear_abs}
Given the  s-zonotope $\mathcal{X}^{(0)}_{|s} = \langle c^{(0)}, R^{(0)}, I^{(0)}\rangle_{|s}$, and let $\alpha^{(k)}, \beta^{(k)}, \gamma^{(k)} \in \mathbb{R}^{n_k+1}$ be some parameter vectors that guarantee the inclusion of the $n_{k+1}$ activation functions in the $k$-th layer, then the enclosure of the NN output set $  g(\mathcal{X}^{(0)}_{|\iota}) \subseteq \mathcal{U}_{|\iota} = \langle c_u, [G,\, H], [I; \, J]\rangle_{\iota}$ is guaranteed for the s-zonotope in \eqref{eq:input_sym} with parameters
\begin{subequations}
\begin{align}
        &c^{(k+1)} = diag(\alpha^{(k)})(W^{(k)}c^{(k)} + b^{(k)}) + \beta^{(k)}, \label{eq:prop_center} \\
        & \tilde{H}^{(k+1)} = \begin{bmatrix}
        diag(\alpha^{(k)}) W^{(k)}\tilde{H}^{(k)}, & diag(\gamma^{(k)})
        \end{bmatrix}, \\
        &  \tilde{G}^{(k+1)} =   diag(\alpha^{(k)}) W^{(k)}\tilde{G}^{(k)}, \quad k = 1,...,l-1, \label{eq:prop_matrix} \\
        &c_u = W^{(l)}c^{(l)}+b^{(l)}, \label{eq:prop_center_out} \\
        &H = W^{(l)}\tilde{H}^{(l)},  \\
        &K = W^{(l)}\tilde{K}^{(l)}, \label{eq:prop_matrix_out} \\
        &J = [!(n_{1}); \ ...  ; !(n_{l})],   \label{eq:prop_ids} 
    \end{align}
\end{subequations}
where $\tilde{G}^{(1)} = diag(\beta^{(0)}) W^{(0)}R^{(0)}$ and $\tilde{H}^{(1)} =  diag(\gamma^{(0)})$.
\end{myproposition}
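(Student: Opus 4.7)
The plan is to proceed by induction on the layer index, exploiting the compositional structure of \eqref{eq:NN_model} together with the inclusion preservation properties of the operations recalled in Section \ref{sec:preliminaries}. The induction hypothesis I would state is: after $k$ layers, the post-activation vector is enclosed by an s-zonotope $\mathcal{X}^{(k)}_{|s} = \langle c^{(k)}, [\tilde{G}^{(k)}, \tilde{H}^{(k)}], [I^{(0)}; J^{(k)}]\rangle_{|s}$, where $J^{(k)} = [!(n_1); \ldots; !(n_k)]$ collects the identifiers of all error symbols introduced by previous activations. The base case $k=0$ is the assumption on the initial set in Section \ref{sec:initial_set}, with $\tilde{G}^{(0)} = R^{(0)}$ and $\tilde{H}^{(0)}$ empty.

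For the inductive step, I would first invoke closure of s-zonotopes under affine images: the pre-activation vector $z^{(k)} = W^{(k)} x^{(k)} + b^{(k)}$ is enclosed by the s-zonotope with center $W^{(k)} c^{(k)} + b^{(k)}$ and generator blocks $(W^{(k)} \tilde{G}^{(k)}, W^{(k)} \tilde{H}^{(k)})$ over the same identifier list. I would then apply the element-wise activation abstraction \eqref{eq:neuron_post_activation} row by row: the hypothesis that $(\alpha^{(k)}_i, \beta^{(k)}_i, \gamma^{(k)}_i)$ satisfies the graph inclusion \eqref{eq:inclusion_condition} guarantees that the scalar s-zonotope $\alpha^{(k)}_i \mathcal{Z}^{(k)}_{i|s} + \beta^{(k)}_i + \gamma^{(k)}_i s_{j_i}$, with $j_i = !(1)$ a freshly issued identifier, encloses $\varphi(z^{(k)}_i)$ in the sense of \Cref{def:inclusion}. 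Vertically concatenating these scalar inclusions via \eqref{eq:vcat}, with the pre-existing identifiers $[I^{(0)}; J^{(k)}]$ already common to every neuron and a freshly generated block $!(n_{k+1})$ appended, yields exactly the recursions \eqref{eq:prop_center}--\eqref{eq:prop_matrix}. The diagonal structures $diag(\alpha^{(k)})$ and $diag(\gamma^{(k)})$ emerge because each neuron's fresh symbol is distinct, so the corresponding columns stack in a block-diagonal pattern with zero padding.

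Once the induction reaches $x^{(l)}$, the output layer of \eqref{eq:NN_model} is purely affine, so one final application of the linear image rule produces \eqref{eq:prop_center_out}--\eqref{eq:prop_matrix_out} without generating new symbols, while \eqref{eq:prop_ids} merely collects the error identifiers accumulated over all hidden layers. Combining the inductive enclosure with the inclusion-preserving nature of each step gives $g(\mathcal{X}^{(0)}_{|\iota}) \subseteq \mathcal{U}_{|\iota}$ as claimed.

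The main obstacle I anticipate is not mathematical depth but disciplined bookkeeping of identifiers: one must argue that $I^{(0)}$ remains common to every neuron at every layer (so no zero padding is needed for the $\tilde{G}^{(k)}$ block, which is why it admits a clean recursion without the block-diagonal structure of $\tilde{H}^{(k+1)}$), and simultaneously that each batch $!(n_{k+1})$ is disjoint from all previously allocated identifiers, which follows from the uniqueness guarantee of the symbol provider recalled after \Cref{def:reduction}. Along the way I would also flag a likely typographical slip in the stated base case $\tilde{G}^{(1)} = diag(\beta^{(0)}) W^{(0)} R^{(0)}$: pattern-matching against \eqref{eq:prop_matrix} and the affine-plus-activation computation strongly suggests this should read $diag(\alpha^{(0)}) W^{(0)} R^{(0)}$, which is the formula my proof would actually deliver.
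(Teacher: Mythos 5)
Your proof follows essentially the same route as the paper's: induction over the layers, using closure of s-zonotopes under affine maps, the per-neuron inclusion-preserving abstraction \eqref{eq:neuron_post_activation} with freshly generated symbols, vertical concatenation over a common identifier list, and the fact that a composition of inclusion functions is an inclusion function. Your flag on the base case is also well taken: instantiating \eqref{eq:prop_matrix} at $k=0$ indeed yields $\tilde{G}^{(1)} = diag(\alpha^{(0)}) W^{(0)}R^{(0)}$, so the stated $diag(\beta^{(0)})$ is a typographical slip.
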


\begin{proof}
Expressions \eqref{eq:prop_center}-\eqref{eq:prop_matrix_out} result from the recursive application of  \Cref{lem:common_id} and the vertical concatenation of the  post-activation s-zonotopes \eqref{eq:neuron_post_activation} for the $n_{k+1}$ neurons of the $k$-th layer. Besides, \eqref{eq:prop_ids} reflects the symbols identifier update of the noise terms introduced at the neurons of each layer.

Regarding the output inclusion, starting with an initial set $\mathcal{X}^{(0)}_{|\iota}$, by induction, given the pre-activation s-zonotope $\mathcal{X}^{(k)}_{|s}$, the operations at the $k$-th layer are: affine mapping; linear abstraction (inclusion preserving for appropriate triplet ($\alpha_i^{(k)},\beta_i^{(k)},\gamma_i^{(k)}$)); and vertical concatenation. Thus, the composition of inclusion functions being an inclusion function, the proof follows.
\end{proof}

For each neuron, the triplet of parameters $(\alpha, \beta, \gamma)$ must be  appropriately designed to satisfy \eqref{eq:inclusion_condition}, while minimizing the conservatism induced by using an affine relaxation. In this regard, a relevant heuristic consists in minimizing the magnitude of the error symbol introduced to guarantee the activation function graph enclosure, i.e. to minimize $|\gamma|$. Due to the independent behaviour of the error symbol, this can be reformulated as minimizing the area of the enclosing parallelogram~\cite{singh2018fast}.

\begin{mylemma}\label{lem:triplets}
Given the bounds $[l, \, u]$ in \eqref{eq:bounds} with $l < u$, the triplet of parameters $(\alpha^*, \beta^*, \gamma^*)$ that minimizes $|\gamma|$ while guaranteeing the satisfaction of \eqref{eq:inclusion_condition} are:
\begin{itemize}[leftmargin =*]
    \item ReLU function  $\varphi(x) = max(0,x)$ 
    \begin{equation}
     \alpha^* = \frac{\varphi(u)-\varphi(l)}{u-l}, \quad \beta^* = \gamma^* = \frac{\varphi(l)-\alpha^* \cdot l}{2}.
\end{equation}
    \item S-shaped functions
    \begin{itemize}
        \item   Sigmoid $\varphi(x) = \frac{1}{1+e^{-x}}$ with $\varphi^\prime(x) = \varphi(x)(1-\varphi(x))$
        \item tanh $\varphi(x) = tanh(x)$ with $\varphi^\prime(x) = 1-\varphi(x)^2$
    \end{itemize}
    \vspace*{2mm}
    \begin{equation}
        \begin{aligned}
         \alpha^* &= \min(\varphi^\prime(l),\varphi^\prime(u)), \\
         \beta^* &= \frac{\varphi(u)+ \varphi(l) - \alpha^* \cdot  (u+l)}{2}, \\
        \gamma^* &= \frac{\varphi(u) - \varphi(l) - \alpha^* \cdot (u -l)}{2}.
    \end{aligned}
    \end{equation}
\end{itemize}
\end{mylemma}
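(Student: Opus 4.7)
The plan is to reformulate the inclusion \eqref{eq:inclusion_condition} as a scalar Chebyshev-style approximation problem and then solve the resulting univariate optimization over the slope $\alpha$. Observing that the first component of \eqref{eq:inclusion_condition} is automatic, the condition is equivalent to the pointwise bound
\begin{equation*}
|\varphi(x) - (\alpha x + \beta)| \le |\gamma|, \quad \forall\, x \in [l, u].
\end{equation*}
Writing $h_\alpha(x) := \varphi(x) - \alpha x$ and denoting $M(\alpha) = \max_{[l,u]} h_\alpha$, $m(\alpha) = \min_{[l,u]} h_\alpha$, the smallest feasible $|\gamma|$ for fixed $\alpha$ is $(M(\alpha) - m(\alpha))/2$, attained by choosing $\beta$ as the midpoint $(M(\alpha) + m(\alpha))/2$. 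So the task reduces to minimizing the oscillation $\Delta(\alpha) := M(\alpha) - m(\alpha)$ over $\alpha \in \mathbb{R}$, and then reading off $\beta^*$ and $\gamma^*$.

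For ReLU the argument exploits piecewise linearity and convexity. When $[l,u]$ does not contain $0$, $\varphi$ is linear on $[l,u]$ and the stated formula collapses to $\alpha^* \in \{0,1\}$ and $\gamma^* = 0$, giving an exact representation. In the nontrivial case $l < 0 < u$, $h_\alpha$ has a kink at $0$ with $h_\alpha(l) = -\alpha l$, $h_\alpha(0) = 0$, $h_\alpha(u) = (1-\alpha)u$. For $\alpha \in [0,1]$, $h_\alpha$ attains its minimum $0$ at $x=0$ and its maximum at an endpoint, equal to $\max(-\alpha l,\, (1-\alpha)u)$. Equalizing the two endpoint values yields the unique balancing slope $\alpha^* = u/(u-l)$, which coincides with the chord slope $(\varphi(u)-\varphi(l))/(u-l)$ since $\varphi(l)=0$; substituting into the midpoint/half-range formulas recovers $\beta^* = \gamma^* = (\varphi(l)-\alpha^*l)/2$. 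For $\alpha$ outside $[0,1]$, the two endpoint terms $-\alpha l$ and $(1-\alpha)u$ grow in the same direction, so $\Delta$ strictly increases, ruling out those slopes.

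For the S-shaped families, the key structural observation is that both derivatives (sigmoid and tanh) are strictly positive and unimodal with their maximum at $0$. Hence, on any interval $[l,u]$, $\min_{[l,u]} \varphi' = \min(\varphi'(l), \varphi'(u)) =: \alpha^*$, whether or not $[l,u]$ straddles the inflection. Choosing $\alpha = \alpha^*$ gives $h_{\alpha^*}'(x) = \varphi'(x) - \alpha^* \ge 0$ on $[l,u]$, so $h_{\alpha^*}$ is non-decreasing, its extrema are attained at the endpoints, and
\begin{equation*}
M(\alpha^*) = \varphi(u) - \alpha^* u, \qquad m(\alpha^*) = \varphi(l) - \alpha^* l.
\end{equation*}
Plugging into the midpoint and half-range formulas produces exactly the stated $\beta^*$ and $\gamma^*$, and the enclosure \eqref{eq:inclusion_condition} is then satisfied by construction. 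For any $\alpha \le \alpha^*$, $h_\alpha$ is still monotone non-decreasing, so $\Delta(\alpha) = \varphi(u) - \varphi(l) - \alpha(u-l)$ is non-increasing in $\alpha$, which shows one cannot improve $|\gamma|$ by decreasing $\alpha$ below $\alpha^*$.

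The delicate part of the argument is the regime $\alpha > \alpha^*$, in which $h_\alpha$ loses monotonicity because the two roots of $\varphi'(x) = \alpha$ (guaranteed by the bell shape of $\varphi'$) enter $[l,u]$ and shift the extrema inside. The approach is to split on the chord slope $\alpha_c := (\varphi(u)-\varphi(l))/(u-l)$ (with $\alpha^* \le \alpha_c$ by the mean value theorem and the bell shape) and analyze the sign pattern of $h_\alpha'$ to relate $\Delta(\alpha)$ to the oscillation in the monotone regime. I expect this to be the main obstacle, and a standard way to close it is to invoke the heuristic from \cite{singh2018fast} reformulating the criterion as minimizing the area of the enclosing parallelogram over the class of relaxations for which $h_\alpha$ is monotone on $[l,u]$; under that natural restriction the preceding argument already suffices, and $\alpha^* = \min(\varphi'(l), \varphi'(u))$ is the optimum.
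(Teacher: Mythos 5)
Your reduction of \eqref{eq:inclusion_condition} to the scalar band condition $|\varphi(x)-(\alpha x+\beta)|\le|\gamma|$ on $[l,u]$, the midpoint/half-range formulas giving the optimal $\beta$ and $|\gamma|$ for fixed $\alpha$, and the complete ReLU argument (balancing the two endpoint values of $h_\alpha$ over $\alpha\in[0,1]$ and discarding slopes outside $[0,1]$) are correct. The S-shaped soundness step is also correct: since $\varphi'$ is positive and unimodal, its minimum over $[l,u]$ is attained at an endpoint, so $h_{\alpha^*}$ is monotone, its extrema sit at $l$ and $u$, and the stated $\beta^*,\gamma^*$ follow. Note that the paper itself states this lemma without proof, citing only the parallelogram-area heuristic of \cite{singh2018fast}, so your soundness argument and the ReLU optimality already go beyond what is written there.

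The difficulty you flag in the regime $\alpha>\alpha^*$ is, however, a genuine gap, and it cannot be closed as stated because the global optimality claim is false there for S-shaped activations. Take $\varphi=\tanh$ on $[l,u]=[-2,2]$. The lemma's triplet gives $\alpha^*=1-\tanh^2(2)\approx 0.071$ and $\gamma^*=\tanh(2)-2\alpha^*\approx 0.82$. But the chord slope $\alpha_c=\tanh(2)/2\approx 0.48$ with $\beta=0$ yields a residual $h_{\alpha_c}(x)=\tanh(x)-\alpha_c x$ that vanishes at $0,\pm 2$ and whose extrema $\pm\bigl(\tanh(\delta)-\alpha_c\delta\bigr)\approx\pm 0.28$ (at the interior points $\pm\delta$ with $1-\tanh^2\delta=\alpha_c$) give an admissible $|\gamma|\approx 0.28<\gamma^*$; the corresponding band still contains the graph, so \eqref{eq:inclusion_condition} holds with a strictly smaller error symbol. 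Hence $\min(\varphi'(l),\varphi'(u))$ does not minimize $|\gamma|$ over all feasible $(\alpha,\beta,\gamma)$ once $[l,u]$ straddles the inflection point, and your restriction to slopes for which $h_\alpha$ is monotone (equivalently $\alpha\le\min_{[l,u]}\varphi'$, so the extrema stay at the endpoints) is not a removable technicality but the only class in which the stated triplet is optimal. A correct proof must either state the optimality relative to that class --- which is evidently the intended reading inherited from \cite{singh2018fast} --- or weaken ``minimizes'' to a sound heuristic choice; as a claim over all $\alpha$ the lemma does not hold for sigmoid/tanh, so no completion of your final step exists.
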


\begin{myremark}
The proposed NN abstraction method shares a similar structure with the zonotope abstraction based on affine arithmetic presented in \cite{singh2018fast} for the (open-loop) NN output bounding. However, here, the explicitly computed  affine symbolic expression \eqref{eq:input_sym} will further play a key role in closed-loop verification, and an efficient computation of the projection matrices exploiting the generators (matrix) structure of s-zonotopes, is also used.
\end{myremark}


\subsection{Dynamical system affine abstraction} \label{sec:system_abs_linear}

Similar to the NN controller dynamics \eqref{eq:NN_model}, the  function \eqref{eq:nonlinear_model} that describes the state evolution at time $(t+1)$ can be abstracted by means of an (inclusion preserving) affine mapping. The resulting s-zonotope will depend on  the symbols that define the state at time $t$, plus some extra symbols that account for:  I)  NN controller non-linearities; II) abstract system non-linearities; III) the uncertainty sources. 

For the computation of a state bounding s-zonotope, it is assumed that the function $f(\cdot)$ in \eqref{eq:nonlinear_model} results from the composition of elementary functions and operators for which an affine symbolic expression (s-zonotope) can be computed. Note that this is not much restrictive since (linear) operations such as linear image, sum or vertical concatenation are closed (i.e. they return s-zonotopes) under affine mappings. Besides, any univariate locally continuous differentiable function can be abstracted through an affine mapping.

\begin{mylemma} \label{lem:general_inclusion}
Let $h:[l, \, u] \to \mathbb{R}$ be a class $\mathcal{C}^1$  function on a given interval $[l, \, u] \subset \mathbb{R}$. Then, the function $\tilde{h}(x,\epsilon) = \alpha x + \beta + \gamma \epsilon$ satisfies that $\forall x \in [l, \, u], \ \exists \epsilon \in [-1, \, +1], h(x) = \tilde{h}(x,\epsilon)$ for the triplet of parameters:
\begin{equation*}
\begin{aligned}
        \alpha &= \frac{h(u)-h(l)}{u-l}, \quad \beta = \frac{h(\underline{x})+h(\bar{x})-\alpha(\underline{x}+\bar{x})}{2}, \\ \gamma &= \frac{h(\bar{x})-h(\underline{x})+\alpha(\underline{x}-\bar{x}) }{2},
\end{aligned}
\end{equation*}
where, defining $\xi(x) = h(x)-\alpha x$, then
\begin{equation*}
\begin{aligned}
    \bar{x} = \argmax_{x\in \{\delta_1,...,\delta_n,u\}}\xi(x),
     \quad \
    \underline{x} = \argmin_{x\in \{\delta_1,...,\delta_n,u\}}\xi(x),
\end{aligned}
\end{equation*}
with $\delta_1,...,\delta_n$  the stationary-points of $\xi(\cdot)$ in $[l, \, u]$. 
\end{mylemma}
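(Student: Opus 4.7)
The plan is to reduce the existence statement to a two-sided range bound on the auxiliary function $\xi(x) = h(x) - \alpha x$ over $[l,u]$, and then recast that bound as an affine expression in a single unit-interval symbol. First I would verify the role of the chosen slope $\alpha$: by construction
\begin{equation*}
\xi(u) - \xi(l) = \bigl(h(u) - h(l)\bigr) - \alpha (u - l) = 0,
\end{equation*}
so $\xi$ takes equal values at the two endpoints. This seemingly small observation is the structural reason why the candidate set for the extrema in the lemma is $\{\delta_1,\dots,\delta_n, u\}$ rather than the a priori natural $\{\delta_1,\dots,\delta_n, l, u\}$.

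Next, since $h$ is of class $\mathcal{C}^1$ on the compact interval $[l,u]$, so is $\xi$, and Weierstrass guarantees that $\xi$ attains a global minimum at some $\underline{x}$ and a global maximum at some $\bar{x}$. By Fermat's interior extremum theorem, each such point is either an interior stationary point of $\xi$ (i.e.\ an element of $\{\delta_1,\dots,\delta_n\}$) or an endpoint of $[l,u]$. Combined with the equality $\xi(l)=\xi(u)$ established above, it is harmless to discard $l$: any extremum attained at $l$ is also attained at $u$, so the argmin/argmax may indeed be taken over $\{\delta_1,\dots,\delta_n, u\}$ exactly as in the statement. It follows that, for every $x \in [l, u]$,
\begin{equation*}
h(\underline{x}) - \alpha\, \underline{x} \;\le\; h(x) - \alpha\, x \;\le\; h(\bar{x}) - \alpha\, \bar{x}.
\end{equation*}

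Finally, I would rewrite this two-sided bound in the form $h(x) - \alpha x = \beta + \gamma\, \epsilon$ with $\epsilon \in [-1,+1]$ by choosing $\beta$ as the midpoint and $\gamma$ as the half-width of the interval $[\,h(\underline{x})-\alpha\underline{x},\, h(\bar{x})-\alpha\bar{x}\,]$. A direct simplification then yields precisely
\begin{equation*}
\beta = \tfrac{1}{2}\bigl(h(\underline{x}) + h(\bar{x}) - \alpha(\underline{x}+\bar{x})\bigr), \quad \gamma = \tfrac{1}{2}\bigl(h(\bar{x}) - h(\underline{x}) + \alpha(\underline{x}-\bar{x})\bigr),
\end{equation*}
matching the formulas given. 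For each $x\in[l,u]$, the witnessing symbol is $\epsilon = (h(x) - \alpha x - \beta)/\gamma$ when $\gamma \neq 0$, and any $\epsilon \in [-1,+1]$ when $\gamma = 0$ (the degenerate case where $\xi$ is constant and $h$ is affine on $[l,u]$). The main obstacle I expect is simply the bookkeeping that justifies restricting the candidate set to $\{\delta_1,\dots,\delta_n,u\}$; once $\xi(l)=\xi(u)$ has been highlighted, the remaining steps are routine compactness and algebra.
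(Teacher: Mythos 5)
Your proposal is correct and follows essentially the same route as the paper's proof: introduce $\xi(x)=h(x)-\alpha x$, locate its extrema via compactness and the stationary-point/endpoint dichotomy, and convert the resulting two-sided bound into the midpoint/half-width form $\beta+\gamma\epsilon$. If anything, your write-up is slightly more careful than the paper's, which asserts $\xi(l)=\xi(u)=0$ (only the equality $\xi(l)=\xi(u)$ holds in general) and does not explicitly justify dropping $l$ from the candidate set or treat the degenerate case $\gamma=0$.
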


\begin{proof}
See Appendix \ref{app:appendix_a}.
\end{proof}

\Cref{lem:general_inclusion} provides a method to propagate (inclusion preserving) s-zonotopes through univariate non-linearities. Besides, the approach in  \Cref{lem:general_inclusion} returns an optimal, in the sense of minimizing the magnitude $|\gamma|$ of the error symbol, set of parameters for convex/concave differentiable functions~\cite{combastel2020functional}. On the other hand, the interaction between multiple variables can be handled through the sum operation \eqref{eq:sum} or by over-approximating the product of two s-zonotopes.

\begin{mylemma}\label{lem:multiplication}
Given two 1-D s-zonotopes $\mathcal{X}_{|s} = \langle c_x, r^T, K\rangle_{|s}$ and $\mathcal{Y}_{|s}= \langle c_y, g^T, K\rangle_{|s}$ with a common set of symbols $s_K$ (with $n = card(s_K)$), then the product $\mathcal{X}_{|s} \times \mathcal{Y}_{|s}$ is included by the s-zonotope $\mathcal{L}_{|s} = \langle c_l, [l^T, \, m], [K; \, j]\rangle_{|s}$  with
\begin{equation}
\begin{aligned}
   c_l &= c_xc_y + \frac{1}{2}\sum_{i=1}^nr_ig_i, \quad \quad  l = c_xg + c_yr, \\
   m &= \frac{1}{2}\sum_{i=1}^n|r_ig_i| + \sum_{i=1}^n\sum_{l>i}^n |r_ig_l + r_lg_i|,
\end{aligned}
\end{equation}
\noindent and $j = !(1)$.
\end{mylemma}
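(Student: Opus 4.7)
The plan is to expand the product $\mathcal{X}_{|s} \cdot \mathcal{Y}_{|s}$ symbolically, isolate the affine-in-$s_K$ part, and then soundly absorb the entire quadratic remainder into a single fresh interval symbol. First I would write, using the common symbol set $s_K$,
\begin{equation*}
\mathcal{X}_{|s}\mathcal{Y}_{|s} = c_x c_y + \sum_{i=1}^n(c_x g_i + c_y r_i)s_i + \sum_{i=1}^n r_i g_i s_i^2 + \sum_{1\le i<l\le n}(r_i g_l + r_l g_i)s_i s_l,
\end{equation*}
which immediately identifies the affine contribution with coefficient vector $l = c_x g + c_y r$ attached to $s_K$, leaving only quadratic monomials to abstract.

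Next I would handle the diagonal terms through the inclusion-preserving rewriting $s_i^2 = \tfrac{1}{2}\bigl(1 + (2 s_i^2 - 1)\bigr)$, valid because $2 s_i^2 - 1 \in [-1,+1]$ whenever $s_i \in [-1,+1]$. The constant part $\tfrac{1}{2}\sum_i r_i g_i$ accumulates into the new center, giving $c_l = c_x c_y + \tfrac{1}{2}\sum_i r_i g_i$, while the fluctuating part contributes a real number lying in an interval of half-width at most $\tfrac{1}{2}\sum_i|r_i g_i|$. The off-diagonal products $s_i s_l$ with $i\neq l$ satisfy $s_i s_l \in [-1,+1]$, so their aggregated contribution lies in an interval of half-width at most $\sum_{i<l}|r_i g_l + r_l g_i|$. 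Summing both half-widths reproduces precisely the scalar $m$ given in the statement.

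Finally I would introduce the fresh symbol $s_j$ with $j = !(1)$, guaranteed unique and independent of $s_K$ by construction, and encode the entire quadratic remainder as $m s_j$. For any valuation $\sigma \in \mathcal{D}(s_K)$ of the original symbols, the exact numerical value of the quadratic remainder lies in $[-m,+m]$, so one can always select a compatible valuation of $s_j$ realizing equality; by \Cref{def:inclusion} this yields the set-valued inclusion $\mathcal{X}_{|\iota}\mathcal{Y}_{|\iota} \subseteq \mathcal{L}_{|\iota}$ with $\mathcal{L}_{|s} = \langle c_l, [l^T, m], [K; j]\rangle_{|s}$.

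The only delicate point is making sure the triangle-inequality bound used to aggregate the $s_i^2$ fluctuations with the $s_i s_l$ cross terms is conservative but tight enough for the inclusion to hold simultaneously over all $\sigma \in \mathcal{D}(s_K)$; this is automatic since we are replacing a scalar-valued expression by a single interval variable. The key structural feature — the preservation of the linear dependencies on $s_K$ through the coefficient vector $l$ — comes out directly from the expansion step and is what distinguishes this bound from a plain interval multiplication.
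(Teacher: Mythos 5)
Your proof is correct and is exactly the computation that produces the stated constants: expanding the product, refining the diagonal terms via $s_i^2=\tfrac12+\tfrac12(2s_i^2-1)$ with $2s_i^2-1\in[-1,+1]$ (which yields the center shift $\tfrac12\sum_i r_ig_i$ and the $\tfrac12\sum_i|r_ig_i|$ contribution to $m$), bounding the cross terms by $\sum_{i<l}|r_ig_l+r_lg_i|$, and absorbing the whole remainder into one fresh independent symbol. The paper states this lemma without proof, so there is nothing to contrast with; your argument is the intended one and establishes the set-valued inclusion required by the paper's Definition of inclusion.
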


\begin{myexample}
Consider the system  $x^+ = \sin(x) - u + 0.1w$, with an initial set $\mathcal{X}_0 = [0, \, 1]$ described by $\mathcal{X}_{|s,0} =  0.5 + 0.5s_1$. This system is controlled by a NN with 1 layer of 2 neurons that, for $\mathcal{X}_{0}$, is abstracted as 
 $\mathcal{U}_{|s}  = 0.1 + 0.2s_1 -0.1s_2 + 0s_3$. The non-linear function $h(x) = \sin(x)$ is abstracted, for $\mathcal{X}_{0}$, as
$\hat{\mathcal{X}}_{|s}  = 0.45 + 0.42 s_1 + 0.03 s_4$. Besides, the independent behaviour of the disturbances is captured by $\mathcal{W}_{|s} = s_5$.  Accordingly, a dependency preserving over-approximation of the successor state  is given by $\mathcal{X}^+_{|s} = 0.35 + 0.22s_1+0.1s_2+0.03s_4+0.1s_5$. The a priori knowledge on the number of error symbols introduced at each abstraction (e.g.  $\mathcal{U}_{|s}$ introduces up to two error symbols, one per neuron) allows to directly store the generators matrix of each s-zonotope by taking into account the common set of symbols, thus  providing an efficient computation of required operations as shown below:
\begin{equation*}
\renewcommand\arraystretch{1.3}
\begin{array}{ccl}
& \begin{array}{>{\centering\arraybackslash$} p{0.6cm} <{$}  >{\centering\arraybackslash$} p{0.5cm} <{$} >{\centering\arraybackslash$} p{0.5cm} <{$}  >{\centering\arraybackslash$} p{0.5cm} <{$} >{\centering\arraybackslash$} p{0.5cm} <{$}  >{\centering\arraybackslash$} p{0.5cm} <{$} }
    c & s_1 & s_2 & s_3 & s_4 & s_5 
\end{array}  & \\[5pt] 
- &\mleft[ \begin{array}{>{\centering\arraybackslash$} p{0.6cm} <{$} | >{\centering\arraybackslash$} p{0.5cm} <{$} >{\centering\arraybackslash$} p{0.5cm} <{$}  >{\centering\arraybackslash$} p{0.5cm} <{$} >{\centering\arraybackslash$} p{0.5cm} <{$}  >{\centering\arraybackslash$} p{0.5cm} <{$} }
  0.1 & 0.2 & -0.1 & 0 & 0 & 0 
\end{array} \mright] & \mapsfrom \ \mathcal{U}_{|s}  \\[5pt]
+& \mleft[ \begin{array}{>{\centering\arraybackslash$} p{0.6cm} <{$} | >{\centering\arraybackslash$} p{0.5cm} <{$} >{\centering\arraybackslash$} p{0.5cm} <{$}  >{\centering\arraybackslash$} p{0.5cm} <{$} >{\centering\arraybackslash$} p{0.5cm} <{$}  >{\centering\arraybackslash$} p{0.5cm} <{$} }
    0.45 & 0.42 & 0 & 0 & 0.03 & 0 
\end{array} \mright] & \mapsfrom \ \hat{\mathcal{X}}_{|s}\\[5pt] 
+ &\mleft[ \begin{array}{>{\centering\arraybackslash$} p{0.6cm} <{$} | >{\centering\arraybackslash$} p{0.5cm} <{$} >{\centering\arraybackslash$} p{0.5cm} <{$}  >{\centering\arraybackslash$} p{0.5cm} <{$} >{\centering\arraybackslash$} p{0.5cm} <{$}  >{\centering\arraybackslash$} p{0.5cm} <{$} }
  0 & 0 & 0 & 0 & 0 & 0.1
\end{array} \mright] & \mapsfrom \ 0.1 \cdot \mathcal{W}_{|s} \\[5pt]
= & \mleft[ \begin{array}{>{\centering\arraybackslash$} p{0.6cm} <{$} | >{\centering\arraybackslash$} p{0.5cm} <{$} >{\centering\arraybackslash$} p{0.5cm} <{$}  >{\centering\arraybackslash$} p{0.5cm} <{$} >{\centering\arraybackslash$} p{0.5cm} <{$}  >{\centering\arraybackslash$} p{0.5cm} <{$} }
  0.35 & 0.22 & 0.1 & 0 & 0.03 & 0.1
\end{array} \mright] & \mapsfrom \ \mathcal{X}^+_{|s}
\end{array}   
\end{equation*}
\end{myexample}

\begin{algorithm}[b!] 
\caption{Finite-time RA verification}\label{alg:reachability}
\hspace*{1mm} \textbf{Inputs:} $\mathcal{X}_0$, NN param $(\bm{W},\bm{b})$,  $\mathcal{G}, \mathcal{A}(i)$, $N$,  $q$ \\
\hspace*{1mm} \textbf{Outputs:} isRAok, $t_{err}$,  $\mathcal{X}_{|s}(j)$ $j = 0,...,min(t_{err},N)$.
\begin{algorithmic}[1]
\State{\textbf{Initialize:} Generate $\mathcal{X}_{|s}(0)$; set $t_{err} \gets \infty$}
\For{i = $0$ to $N-1$}
\If{$\mathcal{X}_{|\iota}(i) \cap \mathcal{A}(i)$}
\State {$t_{err} \gets i$}
\State{\textbf{break all}}
\Else
\State{$\mathcal{U}_{|s}(i) \gets controller(\mathcal{X}_{|s}(i),\bm{W},\bm{b})$} 
\State{$\bar{\mathcal{X}}_{|s}(i+1) \gets system(f(\cdot),\mathcal{X}_{|s}(i),\mathcal{U}_{|s}(i), \mathcal{W}_{|s}(i))$}
\State{$\mathcal{X}_{|s}(i+1) \gets \downarrow_q \bar{\mathcal{X}}_{|s}(i+1)$}
\If{$(i == N-1) \land (\mathcal{X}_{|s}(N) \not\subseteq \mathcal{G})$}
\State {$t_{err} \gets i+1$}
\EndIf
\EndIf
\EndFor
\State isRAok = ($t_{err} == \infty$)
\end{algorithmic}
\end{algorithm}


\subsection{Closed-loop integration}

\Cref{alg:reachability} describes the main steps for the closed-loop finite-time reach-avoid verification problem under an s-zonotope formulation. Steps 7 and 8 represent the local abstraction of the NN controller and dynamical system  through an affine symbolic expression as described in \Cref{sec:controller_abs_linear} and \Cref{sec:system_abs_linear}, respectively. Note that, in Step 8, the uncertain behaviour of disturbances is adequately modeled by  generating a set of  independent symbols at each call $\mathcal{W}_{|s}(i) = s_{I_{w}}$, where $I_w = !(n_w)$. Besides, Step 9 includes the reduction operator introduced in \Cref{def:reduction}. At this step, the less relevant symbols, that is, those that have the least significant impact at the current time instant s-zonotope,  are truncated consistently with the tuning of $q$ providing control on the trade-off between computational complexity and accuracy while preserving inclusion. 

\begin{myremark} \label{rem:discrete_time}
Symbolic approaches also allow to efficiently handle shorter discretization periods ($\Delta T$) in the (discrete-time) system model \eqref{eq:nonlinear_model} than the controller update period ($\Delta h$), since the dependencies between control inputs repeated at different time steps are preserved. Therefore, the discretization period ($\Delta T$), and thus the discretization error,  can be made smaller (up to reduction) at the cost of increasing the number of iterations $N$ for which the system should be evaluated  to meet a specified time horizon of $N \cdot \Delta T$. As an example: consider $\Delta T = \Delta h/2$ and the first two iterations of a system with held input over $\Delta h$ given by $x(1) = -x(0) + u(0)$ and $x(2) = -x(1) + u(0)$, with  $x(0) \in  \langle 0,1,1\rangle_{\iota}$ and $u(0) \in \langle 0,1,2\rangle _{\iota}$. A classical set-valued evaluation yields $\mathcal{X}_{|\iota}(1)  = - \langle 0,1,1\rangle _{\iota} +  \langle 0,1,2\rangle _{\iota} = [-2,2]$ and $\mathcal{X}_{|\iota}(2) = -\mathcal{X}_{|\iota}(1) + \langle 0,1,2\rangle _{\iota} = [-3,3]$, whereas operating at symbolic level gives $\mathcal{X}_{|s}(1) = \langle 0,[-1, \, 1],[1, \, 2]\rangle_{|s}$ and $\mathcal{X}_{|s}(2) = -\mathcal{X}_{|s}(1) + \langle 0,1,2\rangle_{|s} = \langle 0,1,1\rangle_{|s}$ yielding $\mathcal{X}_{|\iota}(2) =[-1, 1]$.
\end{myremark}


\section{Polynomial symbolic expressions} \label{sec:polynotope}

The methodology presented in \Cref{sec:verification} based on s-zonotopes can  be readily extended to use any other well-formed symbolic expression as long as its set-valued interpretation guarantees the inclusion of the controller-system output sets. In particular, this section investigates the use of symbolic polynotopes (s-polynotopes) as in \cite{combastel2020functional} to compute sound approximations relying on polynomial rather than affine dependencies. To that end, s-polynotopes are briefly recalled in \Cref{sec:s-polynotopes}. Then, \Cref{sec:polynotope_NN} investigates the abstraction of the I/O mapping of a NN through an inclusion preserving polynomial symbolic function computed in a propagation-based fashion, that is, in a compositional way possibly benefiting from a systematic use of basic operator overloading for the sake of simple and generic implementations. Besides, \Cref{sec:polynotope_dynam} discusses the main aspects to address a finite-time RA verification problem using a s-polynotope formulation.


\subsection{Symbolic polynotopes} \label{sec:s-polynotopes}

Symbolic polynotopes enable a  tractable computational representation of polynomial symbolic functions by encoding all their relevant information (generators, symbol identifiers and order of the monomials) into matrices.

\begin{mydefinition}[s-polynotope \cite{combastel2020functional}]\label{def:polynotope}
A symbolic polynotope $\mathcal{P}_{|s}$ is a polynomial function that can be written in the form $c+Rs_I^E$ where vector $c$ and matrices $R$ and $E$ do not depend on the symbolic variables in $s_I$. Notation: $\mathcal{P}_{|s} = \langle c,R,I,E\rangle_{|s} = c+Rs_I^E$.
\end{mydefinition}

\Cref{def:polynotope} uses the exponential matrix notation (as in Definitions 23-25 in \cite{combastel2020functional}), where the usually sparse matrix $E$ accounts for exponents of the symbols involved in each monomial. As an example, $s_I = [s_1, s_2]^T$ and $E = [1 \, 0 \, 3; 0 \, 2 \, 4]$, yields $s_I^E = [s_1, s_2^2, s_1^3s_2^4]^T$. Similar to s-zonotopes, a distinction is made between an s-polynotope  as defined in \Cref{def:polynotope} and its set-valued interpretation  defined as the (possibly non-convex) set $\mathcal{P}_{|\iota} =  \{c+R\sigma^E \, | \, \sigma \in \mathcal{D}(s_I)\}$.

Symbolic polynotopes obviously extend  s-zonotopes (obtained from $E = \mathcal{I}$, i.e. with identity as exponent matrix) and are closed under the extension of basic operations already defined for s-zonotopes like linear image, sum or concatenation. The reader is referred to \cite{combastel2020functional} for further details on how to define and operate on s-polynotopes.


 \subsection{NN controller polynomial abstraction} \label{sec:polynotope_NN}
 
The abstraction of the I/O map of a NN controller of the form \eqref{eq:NN_model} using s-polynotopes is presented below. In particular, given a state bounding s-polynotope $\mathcal{X}_{|s} = \langle c,R,I,E\rangle_{|s}$ (note that any initial s-zonotope in \Cref{sec:initial_set} can be directly transformed into an equivalent s-polynotope) the idea is to compute a polynomial symbolic map of the form
\begin{equation}\label{eq:polynomal_map}
    \mathcal{U}_{|s} = \langle c_u, G, Q, E_u\rangle =  c_u + Gs_Q^{E_u},
\end{equation}
\noindent such that, the enclosure of the network outputs is guaranteed, i.e. $g(\mathcal{X}_{|\iota}) \subseteq \mathcal{U}_{|\iota}$. The vector of identifiers in \eqref{eq:polynomal_map} has the structure  $Q = [I, \, J]$, thus involving the symbols in the state bounding s-polynotope (identified by $I$) as well as error symbols (identified by $J$). Notice that the exponent matrix $E_u$ may also capture cross terms involving symbols with identifiers in both $I$ and $J$.

Similar to \cref{sec:controller_abs_linear}, the computation of \eqref{eq:polynomal_map} can be obtained from a forward propagation of $\mathcal{X}_{|s}$ through (in this case) a polynomial relaxation of the activation functions. The pre-activation s-polynotope $\mathcal{Z}_{|s} = W\mathcal{X}_{|s} +b$ and its projection onto the $i$-th neuron are given by
\begin{equation}
\begin{aligned}
           \mathcal{Z}_{|s} &= \langle \Breve{c},\Breve{R}, I,E \rangle_{|s}, \quad  &\mathcal{Z}_{i|s} &= \langle \Breve{c}_i,\Breve{R}_{[i,:]}, I,E \rangle_{|s}, \\
           \Breve{c} &= Wc+b, \quad &\Breve{R} &= WR.
\end{aligned}
\end{equation}

Bounding the set-valued interpretation of  $\mathcal{Z}_{i|s}$ within the interval $[l_i, \, u_i]$, then the polynomial structure of s-polynotopes enables to obtain a sound over-approximation of the NN output by locally covering the activation function graph through an $n$-order polynomial expression of the form
\begin{equation} \label{eq:n-pol}
    \mathcal{X}_{i|s}^{+} = \sum_{m=1}^n \alpha_{i,m}(\mathcal{Z}_{i|s})^m + \beta_i + \gamma_i s_j,
\end{equation}
\noindent where  $s_j$ represents a new independent symbol (identified through $j = !(1)$) introduced to guarantee the  enclosure of the activation function graph in the  range $[l_i, \, u_i] $. Therefore, since $ \mathcal{X}_{i|s}^{+}$ results in an s-polynotope arising from the polynomial mapping of s-polynotopes, the layer post-activation s-polynotope $ \mathcal{X}_{|s}^{+}$ is computed by vertically concatenating the neuron post-activation s-zonotopes.

Polynomial over linear abstractions of the activation function not only allow to reduce the conservatism introduced by the error symbols, but also enable the to compute input-output symbolic relationships that better fit the activation behaviour.

\begin{myexample}\label{ex:s-polynotope}
Suppose that the projection onto a ReLU neuron  is given by the s-polynotope $ \mathcal{Z}_{|s} = 0.5 -0.5s_1 + s_1s_2$, whose set-valued interpretation is bounded/included in the interval $ [l, \, u] = [-1, \, 2]$. Then, the ReLU function can be locally abstracted over this range using an $n=2$-order polynomial of the form $\mathcal{X}_{|s}^+ = \alpha_{2}(\mathcal{Z}_{|s})^2 + \alpha_{1} \mathcal{Z}_{|s} + \beta + \gamma s_3$, with  $(\alpha_{2}, \ \alpha_{1}) = (0.25, \, 0.5), \ \beta = \gamma = 0.125$. This, in turn, generates the post-activation s-polynotope 
\begin{equation*}
\begin{aligned}
        \mathcal{X}_{|s}^+ &= 0.25(0.5-0.5s_1+s_1s_2)^2 + 0.5(0.5-0.5s_1+s_1s_2) \\ & \ \  + 0.125 + 0.125s_3 \\
         & = 0.4375 -0.375s_1 +0.0625s_1^2 + 0.125s_3 + 0.75s_1s_2 \\ & \ \ -0.25s_1^2s_2 + 0.25s_1^2s_2^2.
\end{aligned}
\end{equation*}
\noindent \Cref{fig:poly_relu} depicts the non-convex local  enclosure generated by the set-valued interpretation of $[\mathcal{Z}_{|s};\mathcal{X}_{|s}^+]$.
\end{myexample}  

\begin{figure}[t!]
    \centering
    \includegraphics[width=\linewidth]{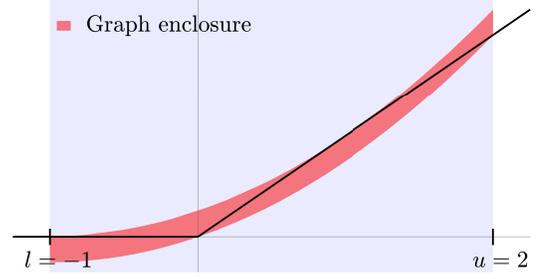}
    \caption{2-nd order polynomial enclosure of ReLu function (\Cref{ex:s-polynotope})}
    \label{fig:poly_relu}
\end{figure}

\Cref{ex:s-polynotope} evidences the complexity/accuracy trade-off inherent to using $n$-order polynomial abstractions: a high $n$ grants an accurate representation of the activation functions; whereas, on the other hand, it increases the computational complexity  due to the increased number of monomials. A reduction strategy is thus used to manage the representation complexity. It can consist in either truncating the maximum degree of the polynomial approximation \eqref{eq:polynomal_map},  or limiting the maximum number of monomials involved. To address this latter issue, an approach consists in (independently) assessing the monomials relevance  based on the $2$-norm of the generators (matrix columns)~\cite{combastel2005state}, and use natural interval extension~\cite{jaulin2001interval} to bound the list of selected monomials through a reduced number of independent symbols. 

The ability of the final s-polynotope \eqref{eq:polynomal_map} to generate a sound over-approximation of the network outputs is guaranteed by selecting (for each neuron) a triplet  $(\alpha, \beta, \gamma)$, where  $\alpha = [\alpha_n,...,\alpha_1]^T$ is an $n$-dimensional vector, that ensures the (local) coverage of the activation function. In the case of the commonly used ReLU activation functions, as shown in \Cref{ex:s-polynotope} their convex nature allows describe them more accurately  than with the sole affine dependencies by using $2$-nd order polynomial expressions. The reduction in the magnitude related to the error symbol is especially significant in those situations where an affine approximation yields a rough description of ReLU function, i.e. for $|l| \approx u$ (if $l<0<u$). 

\begin{myproposition}\label{prop:triplet_param_quad}
Given the interval $[l, \, u]$ with $l <0< u$ and a ReLU activation function $\varphi(x)= max(x,0)$. The  set of parameters
{\begin{itemize}[leftmargin=3mm]
        \item[]  $\alpha_2  =   \frac{1}{2u}, \ \alpha_1 = 1-\alpha_2u, \ \beta =  \gamma= \frac{\alpha_2u^2}{8}$ \hspace{1mm} if $|l| \leq  u \leq 2|l|$       
        \item[]  $\alpha_2  =   \frac{-1}{2l}, \ \alpha_1 = -\alpha_2l, \   \beta =  \gamma= \frac{\alpha_2l^2}{8}$  \hspace*{5.5mm} if $u < |l| \leq 2u$ 
\end{itemize}\small}
\noindent guarantees that  $\eta(x,\epsilon) = \alpha_2x^2+\alpha_1x+\beta +\gamma\epsilon$ satisfies that $\forall x \in [l, \, u], \exists \epsilon [-1, \, 1], \ \varphi(x) = \eta(x,\epsilon)$ with $|\gamma| \leq \frac{3}{8} |\gamma_{aff}^*|$, where $\gamma_{aff}^*$ is the error symbol introduced by the  affine abstraction in \Cref{lem:triplets}.
\end{myproposition}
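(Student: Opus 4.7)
The plan is to verify the two claims separately: (i) the soundness condition that the quadratic envelope covers the ReLU graph on $[l,u]$, and (ii) the ratio bound $|\gamma| \le \tfrac{3}{8}|\gamma^*_{aff}|$. For (i), since $\varphi$ is piecewise affine on $[l,u]$ with breakpoint at $0$, I would introduce the error function $e(x) = p(x) - \varphi(x)$ where $p(x) = \alpha_2 x^2 + \alpha_1 x + \beta$, and study it separately on $[l,0]$, where $e(x) = \alpha_2 x^2 + \alpha_1 x + \beta$, and on $[0,u]$, where $e(x) = \alpha_2 x^2 + (\alpha_1 - 1)x + \beta$. On each piece $e$ is a simple quadratic whose extrema are attained either at the endpoints or at the interior critical point, so the soundness check reduces to finitely many evaluations; if $\max |e| \le \gamma$ then for every $x$ one can pick $\epsilon = -e(x)/\gamma \in [-1,1]$ and recover $\varphi(x) = \eta(x,\epsilon)$.

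For case 1 ($|l| \le u \le 2|l|$) with $\alpha_2 = 1/(2u)$ and $\alpha_1 = 1/2$, a direct computation shows that on $[0,u]$ one has $e(0) = e(u) = u/16$ with interior minimum $e(u/2) = -u/16$; on $[l,0]$ the critical point $-u/2$ lies inside the interval precisely because the upper hypothesis $u \le 2|l|$ forces $l \le -u/2$, giving $e(-u/2) = -u/16$, while $e(0) = u/16$ and the remaining endpoint value $e(l) = l(l+u)/(2u) + u/16$ sits in $[-u/16,\,u/16]$ thanks to $l(l+u)\le 0$, which uses the lower hypothesis $|l|\le u$ to ensure $l+u\ge 0$. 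Hence $|e(x)| \le u/16 = \gamma$ on $[l,u]$. Case 2 then follows from the ReLU identity $\varphi(x) = x + \varphi(-x)$: the substitution $x \mapsto -x$ maps $[l,u]$ to $[-u,-l]$ and swaps the roles of the bounds so that the case-2 hypothesis $u < |l| \le 2u$ becomes the case-1 hypothesis, and matching coefficients of $1$, $x$, $x^2$, $\epsilon$ recovers exactly the proposed parameters for case 2.

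For (ii), Lemma 3 gives $|\gamma^*_{aff}| = u|l|/(2(u-l))$ (since $\varphi(l)=0$, $\varphi(u)=u$). In case 1 the ratio simplifies to $|\gamma|/|\gamma^*_{aff}| = (u/16)\cdot 2(u-l)/(u|l|) = (u-l)/(8|l|) = (u+|l|)/(8|l|)$, which is at most $3/8$ exactly when $u \le 2|l|$; the symmetric computation in case 2 yields $(u+|l|)/(8u) \le 3/8$ iff $|l| \le 2u$, both being the respective case hypotheses. The main technical subtlety is really just the bookkeeping of case 1: both the upper bound $u \le 2|l|$ (needed so that the interior critical point $-u/2$ of $e$ on $[l,0]$ actually lies inside the interval, otherwise the extremum migrates to $l$ and the envelope fails) and the lower bound $|l|\le u$ (needed so that $e(l)$ stays within $\pm\gamma$) are tight, together delineating the exact range of validity of the case-1 parameters and motivating the complementary construction in case 2.
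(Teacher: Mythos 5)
Your proof is correct and, at its core, follows the same elementary route as the paper: split $[l,u]$ at the ReLU breakpoint, verify the quadratic envelope by locating the extrema of a quadratic on each piece, and compare $\gamma$ to $\gamma^*_{aff}$ directly. The organization differs in two ways worth noting. First, you work with a single error function $e(x)=p(x)-\varphi(x)$ and enumerate its endpoint and critical values, whereas the paper argues separately about the upper and lower parabolas $\bar{y},\underline{y}$ via convexity and tangency (tangent to $y=x$ at $u/2$, global minimum $0$ at $-u/2$); these are equivalent in substance. Second, your reduction of case 2 to case 1 through the identity $\varphi(x)=x+\varphi(-x)$ and the reflection $x\mapsto -x$ is cleaner and more explicit than the paper's ``a similar reasoning can be used,'' and your exact computation of the ratio $|\gamma|/|\gamma^*_{aff}|=(u+|l|)/(8|l|)$ shows the $3/8$ bound is equivalent to the case hypothesis, which is sharper than the paper's chain of inequalities. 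One small inaccuracy in your closing remark: if $u>2|l|$ the interior critical point $-u/2$ indeed leaves $[l,0]$, but the envelope does \emph{not} fail there, since $e(l)=\frac{l(l+u)}{2u}+\frac{u}{16}\geq-\frac{u}{16}$ follows from $(l+u/2)^2\geq 0$ regardless; what fails for $u>2|l|$ is only the ratio bound $|\gamma|\leq\frac{3}{8}|\gamma^*_{aff}|$. This does not affect the validity of your proof of the stated proposition.
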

\begin{proof}
See Appendix \ref{app:appendix_b}.
\end{proof}


\subsection{Finite-time RA using s-polynotopes} \label{sec:polynotope_dynam}
 
The main aspects in addressing the RA verification problem using s-polynotopes are discussed below. In general, the same steps presented in \Cref{alg:reachability} can be used while adapting the NN controller and dynamical system abstraction to an s-polynotope formulation. In this case, the computation of the s-polynotope $\mathcal{U}_{|s}$ in Step 7 of \Cref{alg:reachability} has already been presented in \cref{sec:polynotope_NN}. Regarding the abstraction of the non-linear function $f(\cdot)$ in Step 8, since s-polynotopes constitute an extension of s-zonotopes, $f(\cdot)$ can always be abstracted using (at least)  the affine dependency preserving method in  \Cref{lem:general_inclusion}. Note that, s-polynotopes also enable the description of (multivariate) polynomial equations without the need to over-approximate them, at least in all intermediate symbolic compositions and up to some tunable computation load.

It must be taken into account that several operations in a s-polynotope formulation of \Cref{alg:reachability} such as bounding the projection of an s-polynotope onto a neuron, intersection/inclusion of an s-polynotope with an avoid/reach set or the reduction operator (in Step 9), in turn require the  computation of interval bounds from (multivariate) interval polynomial expressions. If computationally affordable, the  range bounds computed by a (simple and fast) interval extension may be refined either by iteratively bisecting the variables domain, or resorting to numerically reliable optimization-based methods.


\section{Input partitioning strategy} \label{sec:partitioning}

In general, the conservatism induced by abstraction-based verification tools strongly depends on the size of the initial set. It is thus extremely useful to assess the regions of the initial/input space for which meaningful (i.e. not too coarse) over-approximations of the closed-loop system evolution can be obtained. To that end, this section  presents an algorithm to split the initial set of a NNCSs verification problem in a smartly guided way. More precisely, the proposed splitting strategy relies on and benefits from the dependency modeling and tracing used in \Cref{sec:verification}. In particular, the algorithm assesses the sensitiveness of the initial/input directions on the satisfaction of a safety property by an s-zonotopic over-approximation through the analysis of the relative influence of the initial symbols. The principle of the algorithm is introduced in \Cref{sec:general_idea_part}. Then, some relevant notions are detailed in \cref{sec:related_notions}, whereas the algorithm pseudo-code for a RA problem implementation is reported in \Cref{sec:implementation_part}. Finally, some further discussion on different settings is presented.

\vspace*{-2mm}
\subsection{Splitting principle}\label{sec:general_idea_part}

The main idea of the proposed algorithm is to keep a linear increase in the number of subsets by only splitting at each iteration the sole initial/input set symbol that has greater influence on the satisfaction of the safety property $\mathcal{S}$ to be verified. To that end, notice that given any initial s-zonotope $\mathcal{X}_{|s}(0) = \langle c_0, R_0, I\rangle_{|s}$ as defined in \Cref{sec:initial_set}, then the successive computation of forward reachable sets returns
 over-approximating s-zonotopes structured as
\begin{equation}\label{eq:out_alg1}
   \mathcal{X}_{|s}(t) =   c_f + R_fs_I + G_fs_J, 
\end{equation}
\noindent where the matrix $R_f$ (resp. $G_f$) reflects the impact of the initial (resp. error) symbols identified by $I$ (resp. $J$) on the computed over-approximation at time $t$. Typically, testing $\mathcal{S}(\mathcal{X}_{|s}(t) )$ boils down to a metric/size evaluation on the set-valued interpretation of $\mathcal{X}_{|s}(t) $ (e.g. to check threshold trespassing). Hence, due to the linearity of \eqref{eq:out_alg1}, the influence of each input symbol $s_i$  $(i\in I)$ can be assessed using a metric that gauges the generator (column of $R_f$) size that is related to $s_i$, whereas the accuracy of an s-zonotope approximation to evaluate $\mathcal{S}$ can be determined by measuring the zonotope $\langle 0, G_f\rangle$ spanned by the error symbols. 

Therefore, at each iteration of the algorithm, an input s-zonotope $\mathcal{X}_{|s}(0)$, such that the corresponding output/final s-zonotope does not satisfy $\mathcal{S}$, is split into two new input s-zonotopes that are later evaluated on the satisfaction of $\mathcal{S}$. The algorithm may run until the satisfaction of the safety property, or until the accuracy of the method (gauged through $\langle 0, G_f\rangle$) is below a certain threshold.


\subsection{Relevant notions} \label{sec:related_notions}

Some relevant notions for the s-zonotope based partitioning algorithm are discussed below.

\subsubsection{Accuracy assessment} considering the evaluation of a safety property for a s-zonotope of the form \eqref{eq:out_alg1}, the accuracy of the over approximation can be assessed by gauging the  zonotope  $\langle 0, G_f\rangle$ spanned by a (set-valued) interpretation of the error symbols. In particular, further implementations make use of the zonotope $F$-radius \cite{combastel2015zonotopes}, that is,the Frobenius norm of the generators matrix $\|G_f\|_F$ to reasoning upon the quality of the affine approximation.

\subsubsection{Input symbols relative influence} the sensitivity of an input symbol $s_i\  (i\in I)$ is computed based on the F-radius ratios of the I/O zonotopes spanned by  $\iota s_i$. That is,  through the ratio ${\|R^{[i]}_f\|_2}/{\|R^{[i]}_0\|_2}$ where $R^{[i]}_0$ (and $R^{[i]}_f$) denote the columns of $R_0$ (and $R_f$) that multiply the symbol $s_i$. This relation is used to quantify how a variation on $s_i$ at the input s-zonotope $\mathcal{X}_{|s}(0)$ affects the output s-zonotope.

\subsubsection{Symbol bisection} bisecting a unit interval symbol $s_i\  (i\in I)$ is done by rewriting it as $s_i \to 0.5 + 0.5s_j$ and $s_1 \to -0.5 + 0.5s_k$, where $j = !(1)$ and $k = !(1)$, thus generating two new s-zonotopes.

\subsubsection{Polyhedral RA sets} checking the empty intersection and/or the inclusion of a state bounding set with/within a polyhedron in half-space representation of the type $\{h_i^Tx\leq r_i, \ i= 1,...,m\}$ can be done by evaluating the infimum/supremum of the projections of the bounding set onto the directions  $h_i \in \mathbb{R}^{n_x}$~\cite{kolmanovsky1998theory}. For  a state bounding s-zonotope of the form \eqref{eq:out_alg1}, the supremum of the dot product with $h$ is computed as
\begin{equation}\label{eq:sup_funct}
    \sup_{x \in \mathcal{X}_{|\iota}(t)} h^T x=  h^Tc_f + \|h^TR_f\|_1 +  \|h^TG_f\|_1 .
\end{equation}


\subsection{Algorithm implementation for finite-time RA} \label{sec:implementation_part}

\Cref{alg:input_partitioning} reflects the pseudo-code of the proposed partitioning strategy to check the satisfaction of a RA problem over a time horizon $N$. \Cref{alg:input_partitioning} uses square brackets to label the different s-zonotopes that arise after splitting. As an example, $\mathcal{X}_{|s}[0]$ reads as the initial s-zonotope (i.e. $\mathcal{X}_{|s}[0] = \mathcal{X}_{|s}^0$), which is split onto a second $\mathcal{X}_{|s}[1]$ and a third $\mathcal{X}_{|s}[2]$ s-zonotopes (with $\mathcal{X}_{|\iota}[1] \cup \mathcal{X}_{|\iota}[2] = \mathcal{X}_{|\iota}[0]$). Besides, $L$ denotes a set of integer labels/indices (for the above example $L = \{0,1,2\}$), and $\mathcal{X}_{|\iota}[L]$ is a shorthand for the set of s-zonotopes $\{\mathcal{X}_{|s}[l] \, | \,  l \in L\}$.

At each iteration, the routine \texttt{reach} runs a slightly modified version of \Cref{alg:reachability}, that, in this case, returns the last time instant (and the corresponding s-zonotope) for which the RA problem is not satisfied. These times-to-last-error are managed by vector $T$. The algorithm iteratively selects the label of the initial s-zonotope that yields the largest time-to-last-error (Step 14). The use of this backward management of the information that prioritizes to split
until 
the RA constraints are satisfied at time  $t$, then at time $t -1$, etc., will be further discussed in the next paragraph. Once the $l$-th (with $l \in L$) s-zonotope has been selected, \texttt{sym-select} returns the initial symbol identifier $i \in I$ that has greater relative influence over the violated property. The symbol $s_i$ of the $l$-th set is split by the routine \texttt{sym-split} that returns two new initial subsets (Step 6).  The times-to-last-error for the new s-zonotopes are computed and the set $L$ and vector $T$ updated (Steps 7-11). In particular, \Cref{alg:input_partitioning} runs either until the RA problem is satisfied for the whole set $L$, or until a maximum number $n_{max}$ of splits is reached.

\begin{algorithm}[t!] 
\caption{Input partitioning for finite-time RA }\label{alg:input_partitioning}
\hspace*{1mm}\textbf{Input:} same as \Cref{alg:reachability}, $n_{max}$  \\
\hspace*{1mm}\textbf{Output:} isRAok, set of s-zonotopes $\mathcal{X}_{|s}[L]$
\begin{algorithmic}[1]

\State{\textbf{Initialize:} $l = n = 0$; $L = \{l\}$; $\mathcal{X}_{|s}[0] = \mathcal{X}_{0}$}
\State{$(t,\mathcal{X}_{|s}(t)[0]) \gets \texttt{reach}(\mathcal{X}_{|s}[0],N)$}
\State $T \gets append(t)$
\While{$(\max(T)>0) \ \lor \ (n/2 == n_{max}$)}
\State{$i \gets \texttt{sym-select}(\mathcal{X}_{|s}[l], \mathcal{X}_{|s}(T(l))[l]) $}
\State{($\mathcal{X}_{|s}[n+1],\mathcal{X}_{|s}[n+2]) \gets \texttt{sym-split}(\mathcal{X}_{|s}[l], i) $}
\For{$j = 1$ to $2$}
\State{$(t,\mathcal{X}_{|s}(t)[n+j]) \gets \texttt{reach}(\mathcal{X}_{|s}[n+j],\max(T))$} %
\State $L \gets L \cup \{n+j\}$ 
\State $T \gets append(t)$
\EndFor
\State{$L \gets L \setminus \{l\}$} 
\State {$T(l) \gets 0$}
\State $l \gets \argmax (T(L))$
\State $n \gets n+2$ 
\EndWhile
\State isRAok = ($\max(T) == 0$)
\end{algorithmic}
\end{algorithm}

\Cref{alg:input_partitioning} manages the information in a backward fashion, that is, it selects a s-zonotope with the higher time-to-last-error. This usually gives better results than working in a forward fashion (that is, selecting the s-zonotope with lower time-to-first-error) since it avoids to get stuck by exhaustively splitting up to the satisfaction of a constraint at time $t$, which, then, may have a small impact on the constraint satisfaction at $t+1$. On this subject, the algorithm can be straightforwardly adapted to handle the forward case by directly using \Cref{alg:reachability} (instead of \texttt{reach}), using $T(l) \gets N+1$ (instead of $T(l) \gets 0$) in Step 13  and selecting the minimum (instead of the maximum) of vector $T$.
Besides, the reduction operation used in \Cref{alg:reachability} must not truncate the initial symbols  even if their relevance decreases with time, so that the input-output mapping of the symbols identified by $I$ is preserved.


\subsection{Other possible settings and applications} \label{sec:discussion_agl}

Other choices for the proposed input partitioning strategy are as follows:
 \begin{itemize}  
    \item The strategy in \Cref{alg:input_partitioning} can  be adapted to handle open-loop verification problems (e.g. elements like the NN in isolation). In this case, the \texttt{reach} routine will only compute the output s-zonotope for the isolated element for a number of forward steps $N=1$.
    \item The maximum number of splits stopping criterion in \Cref{alg:input_partitioning} can be modified/complemented with a tolerance on the accuracy assessment  (see \Cref{sec:related_notions}). In other words, if the accuracy tolerance is fulfilled and a property is still violated, then the algorithm should stop to prevent from further splitting and the safety property is considered as unsatisfied up to the accuracy tolerance.
    \item  Another interesting application is to modify 
    the s-zonotope  split decision rule (Step 14 of \Cref{alg:input_partitioning}) to focus the split  in those regions for which the accuracy of using an affine abstraction is low (i.e. high $\|G_f\|_F$). This tends to return a set of initial s-zonotopes such that each 
    locally provides an accurate (affine) abstraction of the system behavior.   
\end{itemize}


\section{Simulations} \label{sec:simulations}


\subsection{Benchmark description}

The numerical simulations consist in  the discrete-time version of some of the verification problems proposed in the ARCH-COMP 2021~\cite{johnson2021arch}. Five dynamical systems are assessed, namely: single pendulum (\textbf{S}), TORA (\textbf{T)}, unicycle car (\textbf{C)}, adaptive cruise control (\textbf{ACC}) and double pendulum (\textbf{D}).  The above systems have been discretized using the forward Euler method with sampling period $\Delta T$, and they are controlled by a NN controller with control period $\Delta h$. The NN controllers are the ones provided in \cite{johnson2021arch} to control the continuous-time version of the models. To address this issue, the dynamical models have been analyzed under sampling times $\Delta T (\leq \Delta h)$ chosen sufficiently small for the discretization to have negligible impact in the model responses. Under this context, the same safety constraints and initial conditions than the ones proposed in~\cite{johnson2021arch} have been re-used to setup the reported simulations. Note that, as discussed in \Cref{rem:discrete_time}, the use of symbolic approaches supports the variation of $\Delta T$ (for some $\Delta h$) without inducing conservatism due to a loss  of dependencies between repeated control inputs. A detailed description of the systems dynamics can be found in \cite{johnson2021arch}, whereas the main parameters that define each safety verification problem are shown in \Cref{tab:param_benchmark}.


\begin{table}[h]
\centering
\begin{tabular}{cccccc}
\textbf{Problem} & \textbf{Dynamics} & \textbf{NN size}     & \textbf{H}   & $\Delta h$ &  $\Delta T$ \\ \hline
      \textbf{S1} & Single pendulum& (2,25,25,1) & 1 & .05 & .05   \\
      \textbf{S2} & Single pendulum& (2,25,25,1) & 1 & .05 & .001   \\
      \textbf{T1} & TORA & (4,100,100,100,1) & 20& 1& 1 \\ 
      \textbf{T2} & TORA & (4,100,100,100,1) & 20& 1& .01 \\ 
      \textbf{T3} & TORA & (4,100,100,100,1) & 20& 1& .001 \\ 
      \textbf{C1} & Unicycle car & (4,500,2) & 10& .2 & .2 \\
     \textbf{C2} & Unicycle car & (4,500,2) & 10& .2 & .001 \\
      \textbf{ACC1} & Cruise control & (5,20,20,20,20,20,1)& 5 & .1 & .1\\
      \textbf{ACC2} & Cruise control & (5,20,20,20,20,20,1)& 5 & .1 & .001\\
      \textbf{D1} & Double pendulum & (4,25,25,2) & 1 & .05 & .05\\
      \textbf{D2} & Double pendulum & (4,25,25,2) & 1 & .05 & .001\\
      \textbf{D3} & Double pendulum & (4,25,25,2) & 1 & .05 & .05\\
\end{tabular}
\caption{Benchmarks parameters}
\label{tab:param_benchmark}
\end{table}

\renewcommand{\arraystretch}{1.15}
\begin{table*}[t]
\centering
\begin{tabular}{ccccc}
\textbf{Problem} & \textbf{Time} [s] & \textbf{Initial set} & \textbf{Safety} &\textbf{Sft. Horizon} [s]  \\ \hline
\textbf{S1}  & 0.071  & {$[1,1.2] \times [0, 0.2]$} & {$x_1 \in [0,1]$} &  $0.5<t\leq 1$  \\
\textbf{S2} & 0.111 & {$[1,1.2] \times [0, 0.2]$} &  {$x_1 \in [0,1]$} &  $0.5<t\leq 1$ \\ 
\textbf{T1} & 0.036 & $[0.6,0.7]\times[-0.7,-0.6]\times[-0.4, -0.3]\times[0.5,0.6]$ & {$x \in [-2,2]^4$} & $t\leq 20$  \\
\textbf{T2} & 0.182 & $[0.6,0.7]\times[-0.7,-0.6]\times[-0.4, -0.3]\times[0.5,0.6]$  & {$x \in [-2,2]^4$} & $t\leq 20$ \\
\textbf{T3} & 1.515 & $[0.6,0.7]\times[-0.7,-0.6]\times[-0.4, -0.3]\times[0.5,0.6]$  &  {$x \in [-2,2]^4$}  & $t\leq 20$ \\ 
\textbf{C1}  & 0.37 & {$[9.5,9.55]\times[-4.5,-4.45]\times[2.1, 2.11]\times[1.5,1.51]$} & {$x \in [-.6,.6]\times [-.2, .2] \times [-.06,.06] \times [-.3,.3]$} & $t=10$   \\
\textbf{C2} & 21.40 & {$[9.5,9.55]\times[-4.5,-4.45]\times[2.1, 2.11]\times[1.5,1.51]$} & {$x \in [-.6,.6]\times [-.2, .2] \times [-.06,.06] \times [-.3,.3]$} & $t=10$ \\ 
\textbf{ACC1} & 0.095 & {$[90,110]\times[32,32.2] \times 0 \times [10 , 11] \times [30, 30.2] \times 0$} & {$x_1-x_4 \geq 10 + 1.4x_5$} & $t \leq 5$ \\
\textbf{ACC2} & 0.439 & {$[90,110]\times[32,32.2] \times 0 \times [10 , 11] \times [30, 30.2] \times 0$} & {$x_1-x_4 \geq 10 + 1.4x_5$} & $t \leq 5$  \\ 
\textbf{D1}  & 0.251 & $[1, 1.1]\times [1, 1.1]\times [1, 1.2]\times [1, 1.2] $ & {$x \in [-1,1.7]^4$}  & $t\leq 1$ \\ 
\textbf{D2} & 2.113 & {$[1, 1.1]\times [1, 1.1]\times [1, 1.2]\times [1, 1.2] $} & {$x \in [-1,1.7]^4$}  & $t\leq 1$  \\
\textbf{D3} & Fail & {$[1, 1.3]\times [1, 1.3]\times [1, 1.3]\times [1, 1.3] $} & {$(x_1,x_2) \in [-2,2]^2, \ (x_3,x_4) \in [-1.7,1.7]^2$} &  $t\leq 1$\\
& & & &
\end{tabular}
\caption{Benchmark problems and results for an s-zonotope implementation}
\label{tab:results_benchmark}
\end{table*}


\subsection{Benchmark results using s-zonotopes}

All the results reported below were obtained on a standard laptop with  Intel Core i7-8550U@1.8GHz$\times$4 processor and 16GB RAM running Windows 10. \Cref{tab:results_benchmark} shows the set of initial states, the safety constraints (with their time horizon), as well as the time required by an s-zonotope implementation to verify each problem. The reduction order is $q = 200$ in all the experiments. Some particularities are discussed below:
\begin{itemize}
    \item \textit{Single pendulum} (\textbf{S}): in a discrete-time setting, the constraint $x_1 \in [0, \, 1]$ is guaranteed to be satisfied for  problem \textbf{S1} (with $\Delta T = 0.05$s) for the time interval $t \in [0.55, \, 1]$  (that is, for samples $\{11,...,20\}$), whereas in \textbf{S2} (with $\Delta T = 0.001$s) the constraint satisfaction is guaranteed for the time interval $t \in [0.516, \, 1]$. 
    \item \textit{TORA} (\textbf{T}): in $\textbf{T1}$, the closed-loop system is not stable for the discrete-time model obtained for $\Delta T = 1$s. In this case, an unambiguous constraint violation is achieved at  $t = 3$ in $0.036$s. On the other hand, the closed-loop model obtained in \textbf{T2} and \textbf{T3} is stable, and the s-zonotope method verifies the satisfaction of the safety constraint in both problems without resorting to split the input set.  
    \item \textit{Unicycle car} (\textbf{C}): the model under study considers the addition of an unknown-but-bounded disturbance $w \in 10^{-4}[-1,\, +1]$ affecting the fourth state. The safety properties are verified for both \textbf{C1} and \textbf{C2}. In particular, \Cref{fig:car} shows the envelope computed for \textbf{C2} in the time interval $t \in [0, \, 10]$ and how the outer-approximation lies within the goal set at $t=10$. 
    \item \textit{Adaptative cruise control} (\textbf{ACC)}: both problems \textbf{ACC1} and \textbf{ACC2} are verified for the given time horizon.
    \item \textit{Double pendulum} (\textbf{D}): the set of constraints in problems $\textbf{D1-2}$ are violated by the closed-loop system. An unambiguous constraint violation is achieved for \textbf{D1} at $t = 0.25$ and for \textbf{D2} at $t=0.278$. 
    On the other hand, the problem \textbf{D3} cannot be verified from a simple affine abstraction: 
    the accumulated error indeed increases in the reachability analysis of \textbf{D3}, not allowing to guarantee the constraints satisfaction or their unambiguous violation, and thus motivating further extensions.   
\end{itemize}

The results presented above show how, despite their low computational complexity, s-zonotopes yield a high performance in NNCSs verification, being able to verify almost all the benchmark problems without splitting the input set. It is also remarkable the scalability of this approach. As an example, for problem \textbf{T3} with $\Delta T = 0.001$s, $\Delta h = 1$s and time horizon $t \in [0, 20]$, the proposed tool only  requires of $1.515$s to compute and assess $N=20\text{s}/\Delta T = 20.000$ forward iterations.

\begin{figure}[h]
\centering
  \subfloat[States $x_1$ vs $x_2$\label{fig:tora_12}]{%
       \includegraphics[width=\linewidth]{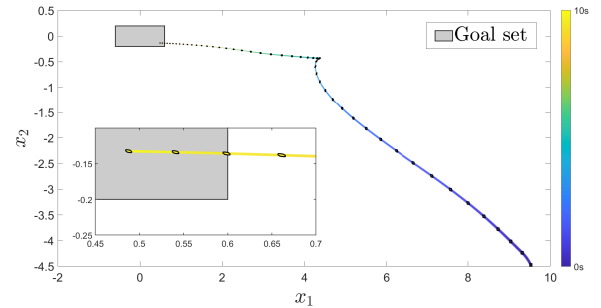}} \\
  \subfloat[States $x_3$ vs $x_4$\label{fig:tora_34}]{%
       \includegraphics[width=\linewidth]{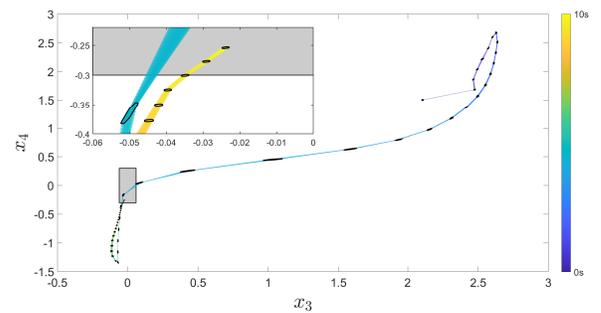}}
\caption{Problem \textbf{C2}: framed zonotopes represent the computed bounds at each $\Delta h = .2$s; blurred lines represent the bounds update at $\Delta T = .001$s.}
\label{fig:car}
\end{figure}


\subsection{Use of s-polynotopes} \label{sec:results_pol}

The capability of s-polynotopes to capture the non-convex map of NNs is illustrated below. To that end, the set of randomly generated neural networks used in  \cite{fazlyab2020safety} are analyzed. All the NNs consists of 2 inputs, 2 outputs and they differ on the number of hidden layers and neurons per layer. The first four NNs present $l = \{1,2,3,4\}$ hidden layers, each having $n_k = 100$ ReLU neurons per layer. The examined NN input set is $\mathcal{X}_0 =  [0.9, \, 1.1]\times [0.9, \, 1.1]$. \Cref{fig:pol_open} shows the set-valued interpretation of the output bounding s-polynotopes obtained  by abstracting the activity functions of active neurons with second order polynomials i.e. with $n=2$ in \eqref{eq:n-pol}. The computation times are $\{0.178, 0.240,2.021,3.329\}s$ for the NNs with $l=1$ to $l=4$ hidden layers, respectively. Similarly, another set of NNs with $l = \{7,8,9,10\}$ hidden layers and $n_k = 10$ ReLU neurons per layer is evaluated for the same input set. \Cref{fig:pol_open_2} represents the interpretation of the resulting s-polynotopes that are computed in $\{0.2389,0.108,0.786,0.155\}$s, respectively. Those examples (\Cref{fig:pol_open} and \Cref{fig:pol_open_2}) taken from \cite{fazlyab2020safety}  illustrate the ability of s-polynotopes composition to accurately generate inclusion preserving polynomial I/O mappings of NNs. As a byproduct, an efficient implicit description of possibly non convex output sets is obtained.

\begin{figure}[h!]
    \centering
     \includegraphics[width=\linewidth]{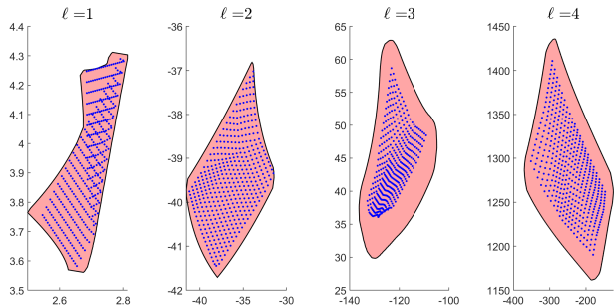} 
    \caption{NNs with 100 neurons per layer and $l = \{1,2,3,4\}$ hidden layers. Set-valued interpretation of the over-approximating s-polynotope (red set); exhaustive evaluation of the NNs (blue dots).}
    \label{fig:pol_open}
\end{figure}

\begin{figure}[h!]
    \centering
     \includegraphics[width=\linewidth]{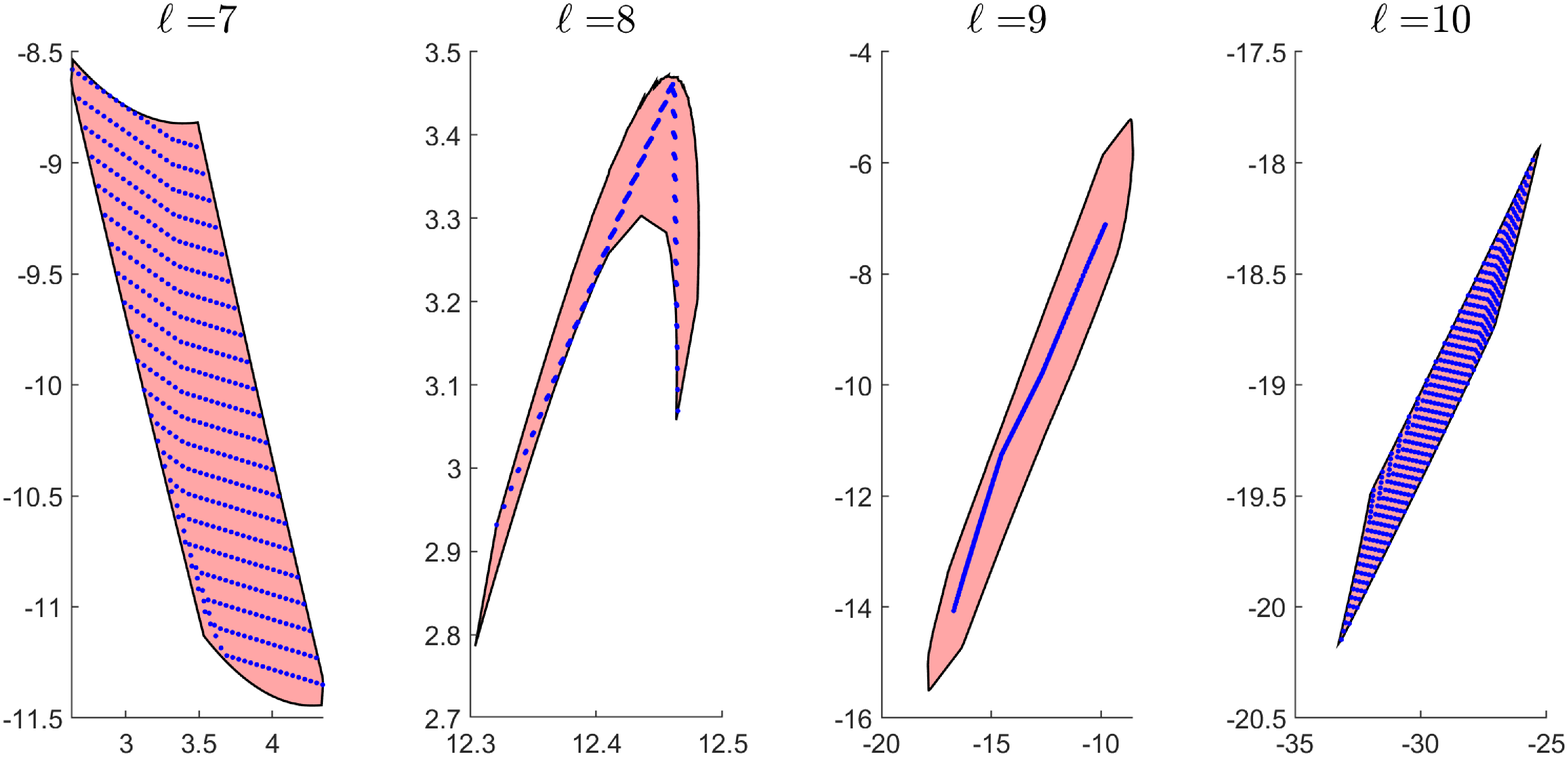} 
    \caption{NNs with 10 neurons per layer and $l = \{7,8,9,10\}$ hidden layers. Set-valued interpretation of the over-approximating s-polynotope (red set); exhaustive evaluation of the NNs (blue dots).}
    \label{fig:pol_open_2}
\end{figure}

\begin{figure}[t]
\centering
  \subfloat[Outputs\label{fig:robot_arm_split_out}]{%
       \includegraphics[width=\linewidth]{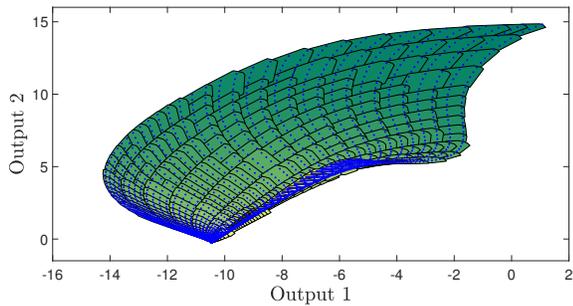}} \hfill
  \subfloat[Inputs \label{fig:robot_arm_split_inp}]{%
       \includegraphics[width=\linewidth]{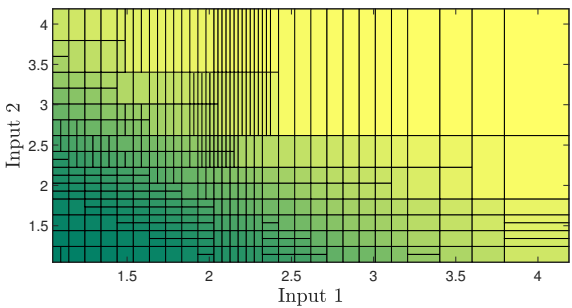}}
\caption{Robot arm example: input partitioning for $(\theta_1,\theta_2) \in [\frac{\pi}{3}, \, \frac{4\pi}{3}]^2$; yellow-green colors characterize the corresponding I/O set pairs; exhaustive evaluation of the NN (blue dots).}
\label{fig:robot_arm}
\end{figure}

\begin{table*}[t]
\centering
\begin{tabular}{c|c|c|cccccccc}
 \multirow{2}{*}{\textbf{d}}& \multirow{2}{*}{\textbf{Nb of splits}} & \multirow{2}{*}{\textbf{Time} [s]}   & 4 & 5 & 6 & 7 & 8 & 9 & 10 & \textbf{Nb of splits} \\  \cline{4-11} \\[-1em]
 & (Algo. \ref{alg:input_partitioning}) & &  (224) & (1344) & (8448) & ($5.49 \cdot 10^4$) & ($3.66 \cdot 10^5$) & ($2.49\cdot 10^6$) & ($1.71\cdot10^7$) & \textbf{Possible comb.}   \\ \hline
  1.75 & 5 & .052 & .446 & \textbf{.967} & $\star$ & $\star$ & $\star$ & $\star$ & $\star$ & \multirow{5}{*}{\thead{\textbf{$\bm{\%}$ of comb.} \\ \textbf{satisfying the} \\ \textbf{safety prop.}}} \\
  1.5 & 6 & .057 & -  & - & \textbf{.213} & $\star$ & $\star$ & $\star$ & $\star$ & \\
  1.3 & 7 & .072 & - & - & .095 & \textbf{.208}  & $\star$  & $\star$  &  $\star$ & \\
  1.2 & 8 & .058 & - & - & - & .032 & \textbf{.092} & $\star$ & $\star$ & \\
  1.1 & 10 & .066 & - & - & - & - & - & .002 & \textbf{.008} &
\end{tabular}
\caption{Partitioning performance:  non existing solutions (-);  worse \protect \footnotemark[2] solutions ($\star$).}
\label{tab:partitioning_performance}
\end{table*}


\subsection{Partitioning strategy}  \label{sec:results_partitioning}

Firstly, in order to show the performance of the partitioning algorithm, it will be applied to the open-loop robotic arm example used in \cite{xiang2020reachable,everett2020robustness}. Particularly, the non-linear dynamics of a 2 DOF robot arm are modeled by a $(2,5,2)$ NN with $tanh$ activations. The considered set of joint angles are extended to $(\theta_1,\theta_2) \in [\frac{\pi}{3}, \, \frac{4\pi}{3}]^2$. An  implementation of \Cref{alg:input_partitioning} adapted to analyze the NN in isolation is executed in order to iteratively minimize the $F$-radius\footnote{The $F$-radius of a zonotope is the Frobenius norm of its generator matrix (see Definition~3 in \cite{combastel2015zonotopes}).} of the zonotope spanned by the error symbols ($\|G_f\|_F$) for a fixed number of $n_{max} = 400$ splits. The computation time of the algorithm is $0.097$s. \Cref{fig:robot_arm_split_inp} shows the resulting pattern of $401$ input subsets, whereas \Cref{fig:robot_arm_split_out} represents the corresponding s-zonotope interpretations obtained in the output space altogether with an exhaustive evaluation of the NN (blue dots). This latter figure shows how \Cref{alg:input_partitioning} achieves an accurate description of the non-convex output set by focusing the splitting effort in those regions of the input space for which an affine abstraction granted by s-zonotopes is not accurate enough.

Furthermore, considering the initial set $(\theta_1,\theta_2) \in [\frac{\pi}{3}, \, \frac{2\pi}{3}]^2$, \Cref{alg:input_partitioning} is set to split up to the satisfaction of the safety constraint $y_1 \leq d$ (where $y_1$ denotes the first output).  \Cref{tab:partitioning_performance} reflects the number of splits and the time required by \Cref{alg:input_partitioning} to satisfy the above safety constraint for different values of $d$. Besides, \Cref{tab:partitioning_performance} also shows, for a fixed number of splits, the number of existing possible combinations of set selections and symbols bisections, as well as how many among them are able to satisfy the property. As an example, for $d = 1.2$, \Cref{alg:input_partitioning} requires $8$ splits. For the same problem, there are no possible combinations of less than 7 splits for which the property can be proven; there exist 18 out of $5.491\cdot 10^4$ possibilities that satisfy it with 7 splits ($0.032\%$); and 336 out of $3.66\cdot 10^5$ possibilities that satisfy it with 8 splits ($0.0918\%$). 

Regarding the closed-loop examination of \Cref{alg:input_partitioning}, this is applied to assess the satisfaction of problem \textbf{D3}, which cannot be satisfied by a simple s-zonotope abstraction. To that end, \Cref{alg:input_partitioning} is set to split up to the satisfaction of the safety constraints in \Cref{tab:results_benchmark}. The algorithm requires a total of 19 splits (i.e. 20 subsets) computed in $5.12$s. \Cref{fig:D1_split} shows the time evolution of the interval enclosure of the resulting 20 reachable sets (light blue background), altogether with 50 random simulations of the closed-loop system (blue dots).

\begin{figure}[h!]
    \centering
     \includegraphics[width=1\linewidth]{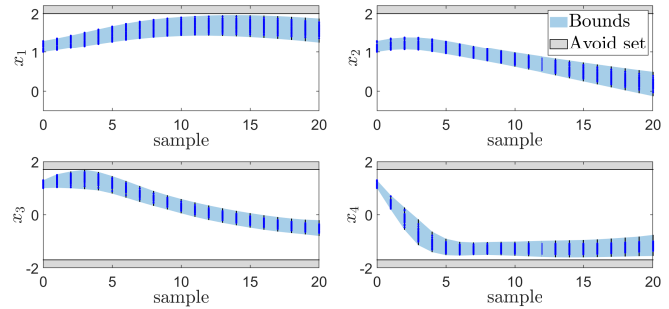} 
    \caption{Input partitioning problem \textbf{D3}: interval enclosure of the resulting 20 reachable sets (light blue); random simulations (blue dots).}
    \label{fig:D1_split}
\end{figure}

\footnotetext[2]{In the sense that require a higher number of splits to verify the property}


\section{Conclusions} \label{sec:conclusions}

A compositional approach focused on inclusion preserving long term symbolic dependency modeling is introduced in this work for the analysis of NNCSs, where such long term is to be understood both in time iterations (regarding the controlled system dynamics) and in layer iterations (regarding the sole NNs). This results in a generic method that has been developped in several ways. Firstly, the matrix structure of s-zonotopes enables to compute (fast and simple) affine symbolic mappings to abstract the I/O mapping of the control loop components. Two further extensions are also  proposed: the use of s-polynotopes to compute inclusion preserving polynomial mappings capable of accurately describing the non-convex map of NNs, and an input partitioning algorithm that benefits from the ability granted by s-zonotopes to preserve linear dependencies between the loop elements. Simulations show the comparative efficiency of the proposals and support the prevalence of dependency preserving methods for closed-loop analysis over the use of accurate, but dependency breaking, output bounding verification tools. Future works should address the integration with the analysis of continuous-time dynamical systems, as well as  the study of optimized affine/polynomial abstractions for achieving better performance in verifying specific safety properties.


\section*{Acknowledgments} 

The authors would like to thank Prof. Vicenç Puig (UPC, Barcelona, Spain) who initially prompted this collaboration and his support during the research stay of the first author at the University of Bordeaux.
This work was in part supported by the Margarita Salas grant from the Spanish Ministry of Universities funded by the European Union NextGenerationEU.


\appendices


\section{Proof of the \Cref{lem:general_inclusion}}\label{app:appendix_a} 
Define $\alpha = \frac{h(u)-h(l)}{u-l}$ and form $\xi(x) =  h(x) - \alpha x$ which is continuous in $[l, \,u]$, differentiable in $(l, \, u)$ and satisfies $\xi(l) = \xi(u) = 0$. Then, the maximum $\xi(\bar{x})$ (resp. minimum $\xi(\underline{x})$) of $\xi(x)$ on $[l, \, u]$ must be for $\bar{x}$ (resp. $\underline{x}$) in the boundary points $\{u,l\}$ or in its stationary points denoted as $\{\delta_1,...,\delta_n\}$, that is, the solutions of  $\xi^\prime(x) = 0 \to h^\prime(x) = \alpha$. 

Given $\xi(\bar{x})$ and $\xi(\underline{x})$, then for all $x \in [l,\,u]$  $\xi(\underline{x}) \leq \xi(x) \leq \xi(\bar{x}) \implies \underline{y}(x) \leq  h(x) \leq  \bar{y}(x)$, with  $\underline{y}(x) =   \alpha(x-\underline{x}) + h(\underline{x})$ and  $\bar{y}(x) =  \alpha(x-\bar{x}) + h(\bar{x})$. Thus, it follows that $\forall x \in [l, \, u], \exists \epsilon \in [-1,+1]$ such that
$$h(x) = \frac{\bar{y}(x)+\underline{y}(x)}{2} + \frac{\bar{y}(x)-\underline{y}(x)}{2}\epsilon$$  or equivalently $h(x) = \tilde{h}(x,\epsilon) =  \alpha x + \beta + \gamma \epsilon$ with
\begin{equation*}
    \begin{aligned}
    \beta &= \frac{\bar{y}(x)+\underline{y}(x)}{2} - \alpha x = \frac{h(\underline{x})+h(\bar{x})-\alpha(\underline{x}+\bar{x})}{2}, \\
    \gamma &= \frac{\bar{y}(x)-\underline{y}(x)}{2} = \frac{h(\bar{x})-h(\underline{x})+\alpha(\underline{x}-\bar{x}) }{2}.
    \end{aligned}
\end{equation*}


\section{Proof of the \Cref{prop:triplet_param_quad}} \label{app:appendix_b} 

\textit{Inclusion preservation:} the sign criterion $\gamma \geq 0$ is chosen below. Given $\bar{y}(x) = \alpha_2x^2+\alpha_1x+\beta+\gamma$ and $\underline{y}(x) = \alpha_2x^2+\alpha_1x+\beta-\gamma$, then parameters $(\alpha_2,\alpha_1,\beta,\gamma)$ ensure local coverage of $\varphi(x)$ if 
$\underline{y}(x) \leq \varphi(x) \leq \bar{y}(x), \forall x \in [l, \,u]$.

Consider firstly the scenario $|l| \leq u \leq 2|l|$. In this case, $\alpha_2 =  \frac{1}{2u} >0$ and thus $\underline{y}(x),\bar{y}(x)$ are strictly convex. 

On the one hand,  $\beta = \gamma$ and $\alpha_1 = 1-\alpha_2u$ impose that, $\underline{y}(0) = 0 = \varphi(0)$ and $\underline{y}(u) = u = \varphi(u)$, whereas  $\underline{y}(l) = \frac{1}{2}(\frac{l^2}{u}+l) \leq 0 = \varphi(l)$ for $u\geq |l|$ and $l<0$. Therefore, from $\underline{y}(l) \leq \varphi(l)$, $\underline{y}(0) = \varphi(0)$, $\underline{y}(u) = \varphi(u)$ and the convexity of $\underline{y}(x)$ wrt $x$, it follows that $\underline{y}(x) \leq \varphi(x) , \forall x \in [l, \, u]$.

On the other hand, from $\alpha_1 = 1-\alpha_2u$ and $\beta= \gamma = \frac{\alpha_2u^2}{8}$, then $\bar{y}(x)$ is tangent to the positive region of $\varphi(x)$ in $\hat{x} = \frac{u}{2}$ (that is, $\bar{y}(\hat{x}) = \varphi(\hat{x}) = \hat{x}$ and $\bar{y}^\prime(\hat{x}) = \varphi^\prime(\hat{x}) = 1$), and thus, since $\bar{y}(x)$ is convex, it follows that $\bar{y}(x) \geq \varphi(x), \forall x \geq 0$. Additionally,  for $\alpha_2 =  \frac{1}{2u}$ the (global) minimum of $\bar{y}(x)$ is  $\bar{y}(x^*) = 0$ for $x^* = \frac{-u}{2}$ (that is, $\bar{y}^\prime(x^*) = 0$), and thus $\bar{y}(x) \geq \varphi(x), \forall x \leq 0$.

A similar reasoning can be used to show the inclusion preservation for the scenario $u<|l|\leq 2u$.

\textit{Conservatism reduction:} For  the scenario $|l| \leq u \leq 2|l|$, the parameter $\gamma$ has the value $\gamma = \frac{\alpha_2u^2}{8} = \frac{u}{16}$. On the other hand, for a ReLU function $\varphi(x) = max(0,x)$ the triplet for an affine abstraction in \Cref{lem:triplets} yields $\gamma_{aff}^* = \frac{u|l|}{2(u+|l|)}$. Therefore, for $u \leq 2|l|$ the following inequality is obtained
\begin{equation*}
    \gamma_{aff}^* = \frac{u|l|}{2(u+|l|)} \geq \frac{u|l|}{2(2|l|+|l|)} = \frac{u}{6} > \frac{u}{16} = \gamma
\end{equation*}
and thus $\gamma \leq \frac{3}{8}\gamma_{aff}^* \sim |\gamma| \leq \frac{3}{8}|\gamma_{aff}^*|$ (since $\gamma,\gamma_{aff}^*>0$). A similar reasoning can be used to prove the case $u<|l|\leq 2u$.


\bibliographystyle{IEEEtran}
\bibliography{arxiv_nncs_verif.bib}

\begin{thebibliography}{10}
\providecommand{\url}[1]{#1}
\csname url@samestyle\endcsname
\providecommand{\newblock}{\relax}
\providecommand{\bibinfo}[2]{#2}
\providecommand{\BIBentrySTDinterwordspacing}{\spaceskip=0pt\relax}
\providecommand{\BIBentryALTinterwordstretchfactor}{4}
\providecommand{\BIBentryALTinterwordspacing}{\spaceskip=\fontdimen2\font plus
\BIBentryALTinterwordstretchfactor\fontdimen3\font minus
  \fontdimen4\font\relax}
\providecommand{\BIBforeignlanguage}[2]{{%
\expandafter\ifx\csname l@#1\endcsname\relax
\typeout{** WARNING: IEEEtran.bst: No hyphenation pattern has been}%
\typeout{** loaded for the language `#1'. Using the pattern for}%
\typeout{** the default language instead.}%
\else
\language=\csname l@#1\endcsname
\fi
#2}}
\providecommand{\BIBdecl}{\relax}
\BIBdecl

\bibitem{szegedy2013intriguing}
C.~Szegedy, W.~Zaremba, I.~Sutskever, J.~Bruna, D.~Erhan, I.~Goodfellow, and
  R.~Fergus, ``Intriguing properties of neural networks,'' \emph{arXiv preprint
  arXiv:1312.6199}, 2013.

\bibitem{kurakin2018adversarial}
A.~Kurakin, I.~J. Goodfellow, and S.~Bengio, ``Adversarial examples in the
  physical world,'' in \emph{Artificial intelligence safety and
  security}.\hskip 1em plus 0.5em minus 0.4em\relax Chapman and Hall/CRC, 2018,
  pp. 99--112.

\bibitem{liu2021algorithms}
C.~Liu, T.~Arnon, C.~Lazarus, C.~Strong, C.~Barrett, M.~J. Kochenderfer
  \emph{et~al.}, ``Algorithms for verifying deep neural networks,''
  \emph{Foundations and Trends{\textregistered} in Optimization}, vol.~4, no.
  3-4, pp. 244--404, 2021.

\bibitem{johnson2021arch}
T.~T. Johnson, D.~M. Lopez, L.~Benet, M.~Forets, S.~Guadalupe, C.~Schilling,
  R.~Ivanov, T.~J. Carpenter, J.~Weimer, and I.~Lee, ``Arch-comp21 category
  report: Artificial intelligence and neural network control systems (ainncs)
  for continuous and hybrid systems plants.'' in \emph{ARCH@ ADHS}, 2021, pp.
  90--119.

\bibitem{yang2019efficient}
G.~Yang, G.~Qian, P.~Lv, and H.~Li, ``Efficient verification of control systems
  with neural network controllers,'' in \emph{Proceedings of the 3rd
  International Conference on Vision, Image and Signal Processing}, 2019, pp.
  1--7.

\bibitem{tran2020nnv}
H.-D. Tran, X.~Yang, D.~Manzanas~Lopez, P.~Musau, L.~V. Nguyen, W.~Xiang,
  S.~Bak, and T.~T. Johnson, ``Nnv: the neural network verification tool for
  deep neural networks and learning-enabled cyber-physical systems,'' in
  \emph{International Conference on Computer Aided Verification}.\hskip 1em
  plus 0.5em minus 0.4em\relax Springer, 2020, pp. 3--17.

\bibitem{claviere2021safety}
A.~Clavi{\`e}re, E.~Asselin, C.~Garion, and C.~Pagetti, ``Safety verification
  of neural network controlled systems,'' in \emph{2021 51st Annual IEEE/IFIP
  International Conference on Dependable Systems and Networks Workshops
  (DSN-W)}.\hskip 1em plus 0.5em minus 0.4em\relax IEEE, 2021, pp. 47--54.

\bibitem{schilling2021verification}
C.~Schilling, M.~Forets, and S.~Guadalupe, ``Verification of neural-network
  control systems by integrating taylor models and zonotopes,'' \emph{arXiv
  preprint arXiv:2112.09197}, 2021.

\bibitem{ivanov2020verifying}
R.~Ivanov, T.~J. Carpenter, J.~Weimer, R.~Alur, G.~J. Pappas, and I.~Lee,
  ``Verifying the safety of autonomous systems with neural network
  controllers,'' \emph{ACM Transactions on Embedded Computing Systems (TECS)},
  vol.~20, no.~1, pp. 1--26, 2020.

\bibitem{sidrane2022overt}
C.~Sidrane, A.~Maleki, A.~Irfan, and M.~J. Kochenderfer, ``Overt: An algorithm
  for safety verification of neural network control policies for nonlinear
  systems,'' \emph{Journal of Machine Learning Research}, vol.~23, no. 117, pp.
  1--45, 2022.

\bibitem{everett2021efficient}
M.~Everett, G.~Habibi, and J.~P. How, ``Efficient reachability analysis of
  closed-loop systems with neural network controllers,'' in \emph{2021 IEEE
  International Conference on Robotics and Automation (ICRA)}.\hskip 1em plus
  0.5em minus 0.4em\relax IEEE, 2021, pp. 4384--4390.

\bibitem{xiang2020reachable}
W.~Xiang, H.-D. Tran, X.~Yang, and T.~T. Johnson, ``Reachable set estimation
  for neural network control systems: A simulation-guided approach,''
  \emph{IEEE Transactions on Neural Networks and Learning Systems}, vol.~32,
  no.~5, pp. 1821--1830, 2020.

\bibitem{everett2020robustness}
M.~Everett, G.~Habibi, and J.~P. How, ``Robustness analysis of neural networks
  via efficient partitioning with applications in control systems,'' \emph{IEEE
  Control Systems Letters}, vol.~5, no.~6, pp. 2114--2119, 2020.

\bibitem{dutta2019reachability}
S.~Dutta, X.~Chen, and S.~Sankaranarayanan, ``Reachability analysis for neural
  feedback systems using regressive polynomial rule inference,'' in
  \emph{Proceedings of the 22nd ACM International Conference on Hybrid Systems:
  Computation and Control}, 2019, pp. 157--168.

\bibitem{huang2019reachnn}
C.~Huang, J.~Fan, W.~Li, X.~Chen, and Q.~Zhu, ``Reachnn: Reachability analysis
  of neural-network controlled systems,'' \emph{ACM Transactions on Embedded
  Computing Systems (TECS)}, vol.~18, no.~5s, pp. 1--22, 2019.

\bibitem{hu2020reach}
H.~Hu, M.~Fazlyab, M.~Morari, and G.~J. Pappas, ``Reach-sdp: Reachability
  analysis of closed-loop systems with neural network controllers via
  semidefinite programming,'' in \emph{2020 59th IEEE Conference on Decision
  and Control (CDC)}.\hskip 1em plus 0.5em minus 0.4em\relax IEEE, 2020, pp.
  5929--5934.

\bibitem{fazlyab2020safety}
M.~Fazlyab, M.~Morari, and G.~J. Pappas, ``Safety verification and robustness
  analysis of neural networks via quadratic constraints and semidefinite
  programming,'' \emph{IEEE Transactions on Automatic Control}, 2020.

\bibitem{zhang2018efficient}
H.~Zhang, T.-W. Weng, P.-Y. Chen, C.-J. Hsieh, and L.~Daniel, ``Efficient
  neural network robustness certification with general activation functions,''
  \emph{Advances in neural information processing systems}, vol.~31, 2018.

\bibitem{tjeng2017evaluating}
V.~Tjeng, K.~Xiao, and R.~Tedrake, ``Evaluating robustness of neural networks
  with mixed integer programming,'' \emph{arXiv preprint arXiv:1711.07356},
  2017.

\bibitem{katz2017reluplex}
G.~Katz, C.~Barrett, D.~L. Dill, K.~Julian, and M.~J. Kochenderfer, ``Reluplex:
  An efficient smt solver for verifying deep neural networks,'' in
  \emph{International conference on computer aided verification}.\hskip 1em
  plus 0.5em minus 0.4em\relax Springer, 2017, pp. 97--117.

\bibitem{dutta2017output}
S.~Dutta, S.~Jha, S.~Sanakaranarayanan, and A.~Tiwari, ``Output range analysis
  for deep neural networks,'' \emph{arXiv preprint arXiv:1709.09130}, 2017.

\bibitem{althoff2015introduction}
M.~Althoff, ``An introduction to cora 2015,'' in \emph{Proc. of the workshop on
  applied verification for continuous and hybrid systems}, 2015, pp. 120--151.

\bibitem{tran2020verification}
H.-D. Tran, S.~Bak, W.~Xiang, and T.~T. Johnson, ``Verification of deep
  convolutional neural networks using imagestars,'' in \emph{International
  conference on computer aided verification}.\hskip 1em plus 0.5em minus
  0.4em\relax Springer, 2020, pp. 18--42.

\bibitem{singh2019abstract}
G.~Singh, T.~Gehr, M.~P{\"u}schel, and M.~Vechev, ``An abstract domain for
  certifying neural networks,'' \emph{Proceedings of the ACM on Programming
  Languages}, vol.~3, no. POPL, pp. 1--30, 2019.

\bibitem{wang2018formal}
S.~Wang, K.~Pei, J.~Whitehouse, J.~Yang, and S.~Jana, ``Formal security
  analysis of neural networks using symbolic intervals,'' in \emph{27th USENIX
  Security Symposium (USENIX Security 18)}, 2018, pp. 1599--1614.

\bibitem{xiang2018output}
W.~Xiang, H.-D. Tran, and T.~T. Johnson, ``Output reachable set estimation and
  verification for multilayer neural networks,'' \emph{IEEE transactions on
  neural networks and learning systems}, vol.~29, no.~11, pp. 5777--5783, 2018.

\bibitem{gowal2018effectiveness}
S.~Gowal, K.~Dvijotham, R.~Stanforth, R.~Bunel, C.~Qin, J.~Uesato,
  R.~Arandjelovic, T.~Mann, and P.~Kohli, ``On the effectiveness of interval
  bound propagation for training verifiably robust models,'' \emph{arXiv
  preprint arXiv:1810.12715}, 2018.

\bibitem{combastel2020functional}
C.~Combastel, ``Functional sets with typed symbols: Mixed zonotopes and
  polynotopes for hybrid nonlinear reachability and filtering,''
  \emph{Automatica}, vol. 143, 110457, 2022.

\bibitem{moore2009introduction}
R.~E. Moore, R.~B. Kearfott, and M.~J. Cloud, \emph{Introduction to interval
  analysis}.\hskip 1em plus 0.5em minus 0.4em\relax SIAM, 2009.

\bibitem{combastel2020distributed}
C.~Combastel and A.~Zolghadri, ``A distributed kalman filter with symbolic
  zonotopes and unique symbols provider for robust state estimation in cps,''
  \emph{International Journal of Control}, vol.~93, no.~11, pp. 2596--2612,
  2020.

\bibitem{singh2018fast}
G.~Singh, T.~Gehr, M.~Mirman, M.~P{\"u}schel, and M.~Vechev, ``Fast and
  effective robustness certification,'' \emph{Advances in neural information
  processing systems}, vol.~31, 2018.

\bibitem{combastel2005state}
C.~Combastel, ``A state bounding observer for uncertain non-linear
  continuous-time systems based on zonotopes,'' in \emph{Proceedings of the
  44th IEEE Conference on Decision and Control}.\hskip 1em plus 0.5em minus
  0.4em\relax IEEE, 2005, pp. 7228--7234.

\bibitem{jaulin2001interval}
L.~Jaulin, M.~Kieffer, O.~Didrit, and E.~Walter, ``Interval analysis,'' in
  \emph{Applied interval analysis}.\hskip 1em plus 0.5em minus 0.4em\relax
  Springer, 2001, pp. 11--43.

\bibitem{combastel2015zonotopes}
C.~Combastel, ``Zonotopes and kalman observers: Gain optimality under distinct
  uncertainty paradigms and robust convergence,'' \emph{Automatica}, vol.~55,
  pp. 265--273, 2015.

\bibitem{kolmanovsky1998theory}
I.~Kolmanovsky and E.~G. Gilbert, ``Theory and computation of disturbance
  invariant sets for discrete-time linear systems,'' \emph{Mathematical
  problems in engineering}, vol.~4, no.~4, pp. 317--367, 1998.

\end{thebibliography}

\rule{0mm}{5cm} \\

\begin{IEEEbiography}[{\includegraphics[width=1in,height=1.25in,clip,keepaspectratio]{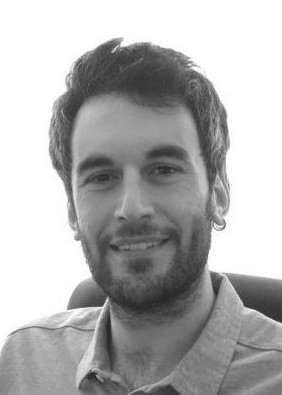}}]{Carlos Trapiello} received the M.Sc. degree in Aerospace Engineering from the Universidad Politécnica de Madrid (UPM), Madrid, Spain, in 2015; and the M.Sc. degree in Automatic Control \& Robotics and the Ph.D. degree in Control Engineering from the Universitat Politècnica de Catalunya-BarcelonaTech (UPC), Barcelona, Spain, in 2018 and 2021, respectively. From January 2022 to October 2022, he has been a postdoc fellow with the IMS lab, University of Bordeaux, France. He currently holds an industrial position in Toulouse, France.
\end{IEEEbiography}

\begin{IEEEbiography}[{\includegraphics[width=1in,height=1.25in,clip,keepaspectratio]{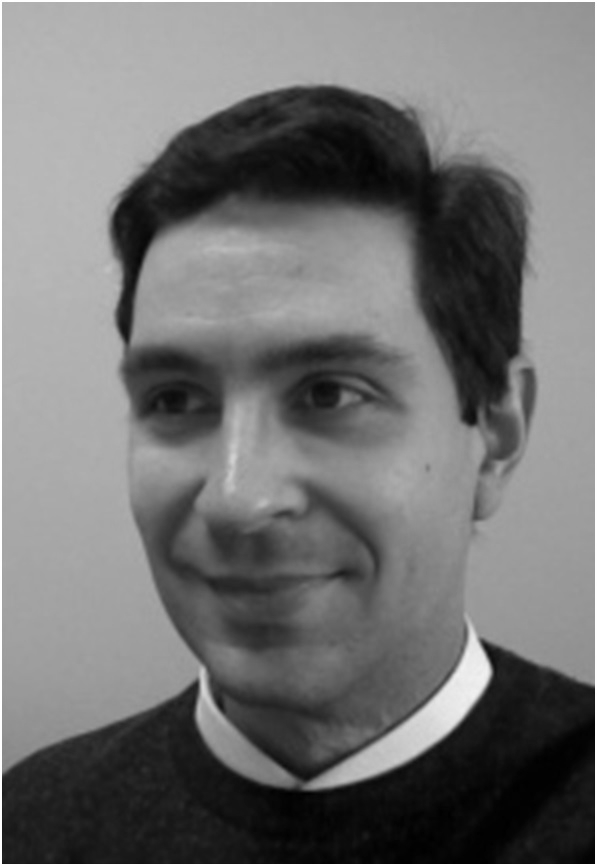}}]{Christophe Combastel} received his MSc degree in electrical engineering (1997) and his PhD in control systems (2000), both from the National Polytechnic Institute of Grenoble (Grenoble-INP), France. From 2001 to 2015, he was associate professor at ENSEA near Paris. Since 2015, he is with the University of Bordeaux and the IMS Lab (CNRS UMR5218). As a member of the ARIA team in the Control System Group of IMS, his research interests include interval, set-membership and stochastic algorithms for integrity control applications (e.g. aerospace) ranging from on-line diagnosis to verified model-based design, with special emphasis on uncertainty propagation, multi-sensor data fusion, and safety/security of cyber-physical systems.	
\end{IEEEbiography}

\begin{IEEEbiography}[{\includegraphics[width=1in,height=1.25in,clip,keepaspectratio]{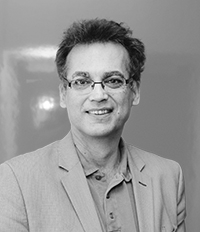}}]{Ali Zolghadri} received his PhD from the University of Bordeaux, France – and has been a Full Professor of Control Systems Engineering there since 2003. His current research interests are centered around autonomy, resilience and safety/(cyber)security of cyber-physical systems. He is member of International Technical Committees “SafeProcess” and “Aerospace” of IFAC – IEEE senior member, and TC member of EuroGNC (Council of European Aerospace societies). He has served as IPC member for various international conferences, and has delivered a number of plenary lectures and other invited talks at venues worldwide. He is an Associate Editor of the “Journal of the Franklin Institute” (Elsevier, USA) and “Complex Engineering Systems” journal – and Editorial Board member of “Aerospace Science and Engineering”, MDPI (Switzerland). He is author / co-author of more than 250 publications in archive journals, refereed conference proceedings and technical book chapters, and co-holder of 15 patents in aerospace. He is the recipient of CNRS Medal of Innovation 2016 which rewards – considering all fields and subfields of research – “\emph{outstanding scientific research with innovative applications in the technological and societal fields}”.
\end{IEEEbiography}

\vfill

\end{document}